\documentclass[pdflatex,sn-mathphys-num]{sn-jnl}

\let\chap\S

\usepackage{algorithm}
\usepackage{algpseudocode}
\usepackage{amsfonts}
\usepackage{tikz}
\usepackage{tikz-3dplot}
\usepackage{graphicx}
\usepackage{bm}
\usepackage{physics}
\usepackage{amsthm}
\usepackage{multirow}
\usepackage{amsmath,amssymb,amsfonts}
\usepackage{mathrsfs}
\usepackage[title]{appendix}
\usepackage{xcolor}
\usepackage{textcomp}
\usepackage{manyfoot}
\usepackage{booktabs}
\usepackage{algorithm}
\usepackage{algpseudocode}
\usepackage{subcaption}
\usepackage{listings}
\usepackage{hyperref}
\PassOptionsToPackage{dvipsnames}{xcolor}

\usetikzlibrary{
  positioning, calc, fit, backgrounds,
  arrows.meta, decorations.pathreplacing,
}

\usetikzlibrary{arrows.meta, calc}

\definecolor{OliveGreen}{rgb}{0, 0.6, 0}

\DeclareMathOperator*{\argmax}{arg\,max}

\newcommand{\X}{\mathcal{X}}
\newcommand{\Y}{\mathcal{Y}}
\newcommand{\Z}{\mathcal{Z}}
\newcommand{\Rho}{\mathcal{R}}
\renewcommand{\L}{\mathcal{L}}
\renewcommand{\H}{\mathcal{H}}

\newcommand{\N}{\mathcal{N}}
\newcommand{\F}{\mathcal{F}}
\newcommand{\mS}{\mathcal{S}}

\theoremstyle{thmstyleone}
\newtheorem{theorem}{Theorem}
\newtheorem{proposition}[theorem]{Proposition}

\theoremstyle{thmstyletwo}

\newtheorem{remark}{Remark}
\newtheorem{corollary}{Corollary}
\newtheorem{lemma}[theorem]{Lemma}

\theoremstyle{thmstylethree}
\newtheorem{definition}{Definition}

\begin{document}

\title{An Information-Theoretic Principle for Optimal Quantum Encoding: Tight Frames and Equiangular Ensembles\thanks{An earlier version of this paper was presented at 2024 Quantum
Techniques in Machine Learning (QTML) at the University of Melbourne.}
}

\author*[1]{\fnm{Farhad} \sur{Farokhi}}%
\email{farhad.farokhi@unimelb.edu.au}

\author[1]{\fnm{Shuixin} \sur{Xiao}}%

\affil*[1]{Department of Electrical and Electronic Engineering,
The University of Melbourne, Parkville, VIC 3010, Australia}

\keywords{Quantum encoding, maximal quantum leakage, state discrimination,
tight frames, equiangular tight frames, SIC-POVMs, statistical inference}

\abstract{
Optimal encoding of classical data for quantum-assisted statistical inference
is investigated from an information-theoretic perspective. We prove that the accuracy of any quantum-computing inference procedure is upper bounded by the maximal quantum leakage from the classical data through its quantum encoding, establishing leakage as a universal, task-agnostic quality
measure for encoders. This demonstrates that the maximal quantum leakage is a universal measure of the quality of the encoding strategy for statistical inference as it only depends on the quantum encoding of the data and not the inference task itself. The optimal universal encoding strategy, i.e., an encoding strategy that maximizes the maximal quantum leakage, is proved to be attained by pure states.  When there are enough qubits, basis encoding is proved to be universally optimal. However, when the dimension of the system is small, phase encoding is optimal. For the latter, any tight frame, any ensemble whose average state is the maximally mixed state, is in fact optimal. Within tight frames, equiangular tight frames (ETFs) are distinguished as the uniquely symmetric optimal encodings, i.e., they saturate the Welch lower bound on pairwise overlaps and possess a self-referential optimal measurement. Prominent special cases are the qubit trine, the regular simplex, and symmetric informationally complete positive operator-valued measures (SIC-POVMs), for which the ETF structure and explicit codeword constructions are provided. Numerical examples are presented to validate the theoretical predictions.
}

\maketitle

\section{Introduction}
\label{sec:intro}
Encoding classical data into quantum systems is the first step in virtually every quantum
computing and quantum communication protocol.
In quantum machine learning~\cite{PhysRevA103032430}, raw data must be mapped to quantum states before being fed into parameterized quantum circuits.
In quantum key distribution~\cite{bennett2014quantum}, bits are encoded in non-orthogonal quantum states to guarantee information-theoretic security.
In quantum communication~\cite{haselgrove2005optimal}, code-words must be chosen to maximize fidelity at the receiver. In all these settings the choice of encoding is consequential. Not all encodings are equal.

A systematic theory of \emph{optimal} quantum encoding has remained elusive, partly because ``optimal'' depends heavily on the figure of merit. Prior work has studied optimal encoding for fidelity~\cite{haselgrove2005optimal},
for retrieval~\cite{elron2007optimal}, for communication over specific channel
families~\cite{korzekwa2022encoding}, for security under gentle measurements~\cite{farokhi2024measuring},
and for incompatibility-based key sharing~\cite{mitra2021optimal}.
However, a universal, task-independent figure of merit for quantum encoders, i.e., an encoding that simultaneously bounds performance across a wide class of inference problems, has not
previously been identified.

In this paper, we adopt \emph{maximal quantum leakage}~\cite{Farokhi_PRA} as that figure
of merit, and pursue it to a complete characterization of the optimal encoder.
Maximal quantum leakage was introduced in~\cite{Farokhi_PRA} as the largest multiplicative increase in an adversary's guessing probability that can result from any measurement on the quantum encoding of a classical random variable. This notion relates to measured Sibson mutual information of order infinity. It satisfies all the axiomatic requirements of a rigorous information-leakage
measure, i.e., positivity, independence, and the post-processing inequality, and is
independent of both the distribution of $X$ and the specific inference task.

To justify the choice of maximal quantum leakage as a figure of merit for quantum encoding, we prove that the accuracy of any statistical inference problem of any quantum inference procedure is bounded above by the maximal quantum leakage. This bound is in fact tight in the sense that there exists at least one inference problem that saturates the presented upper bound. Maximal quantum leakage depends only on the encoding and not on the inference task, which renders its maximizer the universal optimal encoder for quantum inference algorithms.
We subsequently prove that maximal quantum leakage is equal to the optimal success probability in minimum-error quantum state discrimination with equal priors.
Maximizing leakage is therefore equivalent to designing codewords that are
maximally {distinguishable}. We prove that pure states are optimal for encoding under the developed figure merit and characterize the optimal encoding for a range of parameters. An important observation is that tight frames are optimal when the dimension of the quantum system is small enough. Among all tight frames equiangular tight frames (ETFs) have the smallest overlap with each other and saturate the Welch bound on all pairwise overlaps. ETFs are the most symmetric and most robust optimal encodings. We provide closed-form expressions for a range of parameters. 

The connection between tight frames and quantum state discrimination has appeared in the frame-theory literature~\cite{eldar2002optimal, benedetto2003finite}, but its application to the \emph{design} of quantum codewords for a universal inference criterion is new to this paper.

The rest of the paper is organized as follows. 
Section~\ref{sec:prelim} establishes notation and definitions. Section~\ref{sec:inference} presents the statistical inference model, the
universal accuracy bound, and the relationship with state discrimination. Section~\ref{sec:optimal} presents the optimal encoding. Section~\ref{sec:constructions} establishes the relationship with tight frames and presents explicit constructions for a range of parameters. Section~\ref{sec:numerical} reports numerical examples. Section~\ref{sec:conclusion} concludes the paper.

\section{Preliminaries}
\label{sec:prelim}
\paragraph{Notation.}
All logarithms are in binary basis, motivated by the notion of bits in information theory. Random variables are denoted by capital Roman letters, e.g., $X$. A discrete random variable $X$ with finite alphabet $\X$ is characterized by its
probability mass function $\mathbb{P}\{X=x\}>0$ for $x\in\X$. The restriction to $\mathbb{P}\{X=x\}>0$, $\forall x\in\X$, is without loss of generality as any realization with zero probability can be removed with no impact. We write $|\X|$ to denote the number of distinct alphabets or the cardinality of the set $\X$.

\paragraph{Quantum states and measurements.}
Let $\H$ denote a finite-dimensional complex Hilbert space of dimension $d:=\dim\H$. The set of linear operators on $\H$ is $\L(\H)$. A \emph{density operator} is a positive semi-definite operator $\rho\in\L(\H)$ with $\trace(\rho)=1$; the set of all density operators is $\mS(\H)$. A density operator $\rho$ is \emph{pure} if $\rank(\rho)=1$, equivalently if $\rho= \ket{\psi} \bra{\psi}$ for some unit vector $\ket{\psi}\in\H$. A \emph{positive operator-valued measure} (POVM) is a finite collection $\{F_y \}_{y\in\Y} \subset\L(\H)$ satisfying $F_y\geq 0$, $\forall y\in\Y$, and $\sum_{y\in\Y}F_y=I$.
By Born's rule, the probability of outcome $y$ when measuring state $\rho$ is $\mathbb{P}\{Y=y\} =\trace(F_y\rho)$. A \emph{quantum channel} $\N:\mS(\H)\to\mS(\H')$ is a completely positive and
trace-preserving linear map~\cite{wilde2013quantum}.

\paragraph{Maximal quantum leakage.}
Consider encoding classical data $X\in\X$  into a quantum system $A$  by preparing the system in state $\rho^x\in \mS(\H)$ when $X=x$, $\forall x\in\X$. The collection $\Rho=\{\rho^x\}_{x\in\X}$ is called the \emph{quantum encoding} of $X$.

\begin{definition}[Maximal Quantum Leakage {\cite{Farokhi_PRA}}]
\label{def:qml}
The maximal quantum leakage from $X$ through quantum system $A$ is
\begin{align}\label{eqn:def_qml}
    \mathcal{Q}(X\to A)_\rho
    :=\sup_{\{F_y\}_{y\in\Y}}
    \log\left(\sum_{y\in\Y}
    \max_{x\in\X}\trace(\rho^x F_y)\right),
\end{align}
where the supremum is over all POVMs with arbitrary finite outcome set $\Y$.
\end{definition}

Maximal quantum leakage captures the largest multiplicative increase in the
probability of correctly guessing an arbitrary function of $X$ that any measurement on
the quantum encoding can produce~\cite{Farokhi_PRA}.
It satisfies the post-processing inequality, i.e., $\mathcal{Q}(X\to A)_{\N(\rho)}\leq
\mathcal{Q}(X \to A)_\rho$ 
for any quantum channel $\N$~\cite[Proposition~3]{Farokhi_PRA}. Furthermore, maximal quantum leakage is known to be upper bounded as $\mathcal{Q}(X \to A)_\rho\leq\min\{\log (N),2\log (d)\}$~\cite[Proposition~2]{Farokhi_PRA}.

\paragraph{Minimum-error quantum state discrimination.}
Consider a quantum system prepared in state $\rho_i$ with prior probability $q_i$, $i=1,\dots,N$. A POVM $\{M_i\}_{i=1}^N$ is applied to identify the state with the probability of success given by $\sum_i q_i\trace(M_i\rho_i)$. The optimal success probability is
\begin{align}\label{eqn:pguess}
    P_{\rm guess}\left( \{q_i,\rho_i\}_{i=1}^N\right)
    \;:=\;
    \max_{\{M_i\}}\sum_{i=1}^N q_i\trace(M_i\rho_i)
    \quad\text{s.t.}\quad
    M_i\geq 0,\;\;\sum_{i=1}^N M_i=I.
\end{align}
This semi-definite program is efficiently solvable and admits strong duality~\cite{barnett2009quantum, bae2015quantum}. 

\paragraph{Frames.}
A collection of vectors $\{\ket{\psi_x}\}_{x\in\X}$ in $\H$ is a \emph{frame} if it spans $\H$, i.e., if the frame operator
$S := \sum_{x\in\X} \ket{\psi_x}\bra{\psi_x}$ is invertible. The frame is \emph{tight} with frame bound $A$ if $S=AI$, and \emph{unit-norm tight} if additionally each $\ket{\psi_x}$ is a unit vector and $A = N/d$. Throughout this paper, ``tight frame'' means unit-norm tight frame unless stated otherwise.

\begin{figure}
    \centering
    \begin{tikzpicture}[scale=.8]
		\node[rectangle, fill=orange!20, minimum width=7.9cm, minimum height=3cm, scale=.8] at (2.35,.3) {}; 
        \node[rectangle, fill=OliveGreen!20, minimum width=4.5cm, minimum height=3cm, scale=.8] at (8.9,.3) {}; 
		\node[rectangle, fill=red!20, minimum width=2.75cm, minimum height=3cm, scale=.8] at (-3.13,.3) {}; 		
		\node[scale=.8] at (6.48,1.4) {+};
		\node[scale=.8] at (2.3,1.4) {Quantum Circuit};
		\node[scale=.8] at (8.9,1.4) {Classical Computing};
		\node[scale=.8] at (-3.1,1.4) {Problem Setup};		
		\node[draw, rectangle, rounded corners=2mm, minimum width=2.5cm, minimum height=2cm, scale=.8] at (-3.15,0) {
			\begin{minipage}{2.3cm}
				\centering
				Classical data: \\ Input $X$ \\ Output~$Z$
			\end{minipage}
		};		
		\node[draw, rectangle, rounded corners=2mm, minimum width=2.5cm, minimum height=2cm, scale=.8]{
			\begin{minipage}{2.5cm}
				\centering 
				Encoded input: \\ $X\mapsto\rho^X$
			\end{minipage}
		};
		\node[draw, rectangle, rounded corners=2mm, minimum width=2.5cm, minimum height=2cm, scale=.8] at (3.25,0) {
			\begin{minipage}{2.5cm}
				\centering 
				Quantum computing: \\ $\mathcal{N}$
			\end{minipage}
		};
		\draw[->,xshift=3.25cm] (-1.88,.7) -- (-1.37,.7);
		\draw[->,xshift=3.25cm] (-1.88,.2) -- (-1.37,.2);
		\draw[->,xshift=3.25cm] (-1.88,-.7) -- (-1.37,-.7);
		\node[xshift=3.13cm,scale=.8] at (-2.3,-.13) {$\vdots$};
		\node[draw, rectangle, rounded corners=2mm, minimum width=2.5cm, minimum height=2cm, scale=.8] at (6.5,0) {
			\begin{minipage}{2.5cm}
				\centering
				Measure class: \\
				POVM $\{O_c\}_c$
			\end{minipage}
		};
		\draw[->,xshift=6.5cm] (-1.88,.7) -- (-1.37,.7);
		\draw[->,xshift=6.5cm] (-1.88,.2) -- (-1.37,.2);
		\draw[->,xshift=6.5cm] (-1.88,-.7) -- (-1.37,-.7);
		\node[xshift=5.72cm,scale=.8] at (-2.3,-.13) {$\vdots$};
		\node[draw, rectangle, rounded corners=2mm, minimum width=2.5cm, minimum height=2cm, scale=.8] at (9.65,0) {
		};
		\node[scale=.8] at (9.65,0) {
		\begin{minipage}{2.3cm}
			\centering
			Classical processing: \\[.3em]
			Output $\widehat{Z}$
		\end{minipage}
		};
		\draw[->,xshift=9.75cm] (-1.88,0) -- (-1.37,0);
	\end{tikzpicture}
    \caption{Information processing path for quantum statistical inference.}
    \label{fig:schematic}
\end{figure}

\section{Quantum-Assisted Statistical Inference}
\label{sec:inference}
Consider jointly distributed discrete random variables $X\in\X$ (input) and
$Z\in\Z$ (output), with the inference goal of predicting $Z$ from $X$. This is a general framework capturing, as an example, quantum machine learning (for classification) and quantum sensing (in discretized form). The information pipeline is depicted in Figure~\ref{fig:schematic} with its building elements discussed below. 
A \emph{quantum inference procedure} $(\Rho,\N,\F,\gamma)$ comprises:
\begin{itemize}
\item \textit{Encoding}: For each realization $X=x$, prepare system $A$ in state $\rho^x\in\mS(\H)$ with $\dim(\H)=d$, giving encoding $\Rho=\{\rho^x\}_{x\in\X}$;
\item \textit{Processing}: Apply quantum channel $\N:\mS(\H)\to\mS(\H')$, which can be a quantum machine learning policy with parameters to be optimized or a specific algorithm, such as quantum Fourier transform;
\item \textit{Measurement}: Apply POVM $\F=\{F_y\}_{y\in\Y}$ on system $A$, yielding outcome $Y\in\Y$ with probability $\mathbb{P}\{Y=y|X=x\}=\trace(F_y\N(\rho^x))$;
\item \textit{Classical post-processing}: Form estimate $\widehat{Z}\in\Z$ from $Y$ via any  stochastic kernel $\gamma_{zy}=\mathbb{P}\{\widehat{Z}=z|Y=y\}$, such as classical machine learning or signal processing algorithms.
\end{itemize}
The accuracy of the procedure is $\mathbb{P}\{\widehat{Z}=Z\}$. This is the probability of correct inference of output $Z$ based on input $X$. In what follows, we use the notation $N:=|\X|$ and $d:=\dim(\H)$ for notational brevity when needed.

\begin{theorem}[Universal Accuracy Bound]
\label{tho:stat_inference}
The accuracy of any quantum inference procedure $(\Rho,\N,\F,\gamma)$ satisfies
\begin{align}\label{eqn:inequality}
    \mathbb{P}\{\widehat{Z}=Z\}
    \leq 2^{\mathcal{Q}(X\!\to\!A)_\rho}
    \max_{z\in\Z}\mathbb{P}\{Z=z\}.
\end{align}
This bound is tight in the sense that there exists inference output $Z$, quantum processing circuit $\mathcal{N}$, POVM $\mathcal{F}$, and classical post-processing algorithm $\gamma$ for which the equality holds.     
\end{theorem}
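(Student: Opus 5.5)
The plan is to reduce the whole pipeline to a single POVM acting on the encoding, and then bound the guessing probability directly by the leakage expression. Combine the channel $\N$, the POVM $\F=\{F_y\}_{y\in\Y}$ and the kernel $\gamma$ into the effective operators
\begin{align*}
G_z := \sum_{y\in\Y}\gamma_{zy}\,\N^\dagger(F_y), \qquad z\in\Z,
\end{align*}
where $\N^\dagger$ is the (unital, completely positive) adjoint of $\N$. Then $\{G_z\}_{z\in\Z}$ is a POVM on $\H$ and $\mathbb{P}\{\widehat{Z}=z|X=x\}=\trace(G_z\rho^x)$. It is also essential that $\widehat{Z}$ depends on $Z$ only through $X$, i.e.\ $Z - X - \widehat{Z}$ is a Markov chain. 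This holds because the pipeline only sees $X$.

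Next write the accuracy and bound it:
\begin{align*}
\mathbb{P}\{\widehat{Z}=Z\}=\sum_{z}\sum_{x}\mathbb{P}\{X=x,Z=z\}\trace(G_z\rho^x)
\le \sum_z \mathbb{P}\{Z=z\}\max_{x}\trace(G_z\rho^x).
\end{align*}
This uses $\sum_x \mathbb{P}\{X=x|Z=z\}\trace(G_z\rho^x)\le \max_x \trace(G_z\rho^x)$. Bounding $\mathbb{P}\{Z=z\}\le\max_{z'}\mathbb{P}\{Z=z'\}$ gives
\begin{align*}
\mathbb{P}\{\widehat{Z}=Z\}\le \max_{z'}\mathbb{P}\{Z=z'\}\sum_z\max_x\trace(G_z\rho^x).
\end{align*}
Since $\{G_z\}$ is a POVM with finite outcome set $\Z$, the sum is at most $2^{\mathcal{Q}(X\to A)_\rho}$ by Definition~\ref{def:qml}. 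This proves \eqref{eqn:inequality}.

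For tightness, let $\{F_y^\star\}$ be a POVM attaining, or approaching, the supremum in \eqref{eqn:def_qml}. Attainment should follow from compactness once the number of outcomes is bounded, e.g.\ by $N$ after merging outcomes sharing the same maximizing $x$; merging can only increase the objective. I expect this step to be the main obstacle: carefully justifying that the supremum is a maximum over POVMs indexed by $\X$. After merging, we may take $\Y=\X$ with $F_x^\star$ grouping the outcomes whose maximizer is $x$. Choose $\N$ to be the identity channel, $\F=\{F^\star_x\}$, $Z=X$ and $\gamma$ the identity. Equality in the first inequality then requires the $x$-average to hit the max. We take $X$ uniform, so that $\max_z \mathbb{P}\{Z=z\}=1/N$. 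The accuracy becomes $\frac1N\sum_x\trace(F^\star_x\rho^x)$.

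It remains to check that $\sum_x\trace(F^\star_x\rho^x)=\sum_x\max_{x'}\trace(F^\star_x\rho^{x'})$. This holds by construction of the merging, since outcome $x$ collects exactly those original outcomes maximized at $x$. Then equality holds in \eqref{eqn:inequality}. This also foreshadows the later identification of the leakage with $N\cdot P_{\rm guess}$ under equal priors.
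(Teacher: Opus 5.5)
Your upper bound is correct, but the tightness part has a gap: you choose the prior of $X$, which the statement does not let you do.

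\textbf{Upper bound.} Your route differs from the paper's. The paper applies the operational characterization \cite[Theorem~1]{Farokhi_PRA} to the processed ensemble $\{\N(\rho^x)\}$ and then uses the post-processing inequality to return to $\rho$. You instead pull $\N$, $\F$ and $\gamma$ back to a single POVM $G_z=\sum_y\gamma_{zy}\N^\dagger(F_y)$ on $\H$ and bound the accuracy directly from Definition~\ref{def:qml}. This proves the data-processing step inline and needs only the easy direction of the operational characterization, so it is a cleaner, self-contained argument.

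The attainment step you flag as the main obstacle is routine. The $N$-outcome POVMs form a compact set and the merged objective is continuous, so a maximizer exists. Re-merging that maximizer gives $\trace(F^\star_x\rho^x)=\max_{x'}\trace(F^\star_x\rho^{x'})$ and $\sum_x\trace(F^\star_x\rho^x)=2^{\mathcal{Q}(X\to A)_\rho}$.

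\textbf{Tightness.} You set $X$ uniform, but the statement only gives you freedom over $Z$, $\N$, $\F$ and $\gamma$. The prior of $X$ is part of the given problem. Since the leakage does not depend on it, tightness is meant for every full-support prior, which is what the paper's appeal to \cite[Theorem~1]{issa2019operational} supplies.

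For a fixed non-uniform $\mathbb{P}_X$, the choice $Z=X$ fails. Write $t_x=\trace(F^\star_x\rho^x)$. The accuracy is $\sum_x\mathbb{P}\{X=x\}\,t_x$, while the right-hand side is $\max_x\mathbb{P}\{X=x\}\sum_x t_x$. Equivalently, your step $\mathbb{P}\{Z=z\}\le\max_{z'}\mathbb{P}\{Z=z'\}$ becomes strict. Take basis encoding with $d\ge N$: with $Z=X$ the accuracy is at most $1$, while the right-hand side is $N\max_x\mathbb{P}\{X=x\}>1$. So no choice of $\N,\F,\gamma$ achieves equality there.

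\textbf{The missing idea} is to refine $X$ into equal-mass atoms (a ``shattering'' output):
\begin{itemize}
\item Let $p^\star=\min_x\mathbb{P}\{X=x\}>0$ and $m_x=\lfloor\mathbb{P}\{X=x\}/p^\star\rfloor\ge 1$.
\item Let $Z=(X,J)$, where given $X=x$, $J=j\in\{1,\dots,m_x\}$ with probability $p^\star/\mathbb{P}\{X=x\}$ each, and $J=0$ with the leftover probability.
\item Then $\mathbb{P}\{Z=(x,j)\}=p^\star$ for $j\ge 1$ and $\mathbb{P}\{Z=(x,0)\}<p^\star$, so $\max_z\mathbb{P}\{Z=z\}=p^\star$.
\item Take $\N=\mathrm{id}$, your merged optimal POVM $\{F^\star_x\}$, and $\gamma$ mapping outcome $y$ deterministically to $(y,1)$.
\end{itemize}
This gives $\mathbb{P}\{\widehat{Z}=Z\}=\sum_x p^\star\,\trace(F^\star_x\rho^x)=p^\star\,2^{\mathcal{Q}(X\to A)_\rho}$, which is exactly equality for the given prior. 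For uniform $X$ this reduces to your construction, since then $m_x=1$.
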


\begin{proof}
By the definition of quantum maximal leakage and~\cite[Theorem~1]{Farokhi_PRA}, we get
\begin{align*}
    \frac{\mathbb{P}\{\widehat{Z}=Z\}}{\max_{z}\mathbb{P}\{Z=z\}}
    &\leq
    \sup_{\{F_y\}_{y\in\Y}}\sup_{Z,\widehat{Z}}
    \frac{\mathbb{P}\{\widehat{Z}=Z\}}{\max_z\mathbb{P}\{Z=z\}}
    =
    2^{\mathcal{Q}(X\!\to\!A)_{\N(\rho)}}.
\end{align*}
The post-processing inequality gives
$\mathcal{Q}(X\!\to\!A)_{\N(\rho)}\leq\mathcal{Q}(X\!\to\!A)_\rho$,
establishing~\eqref{eqn:inequality}.
The tightness of the bound stems from that the bound is attained with  $\N=\mathrm{id}$ (identity map), the optimal POVM in~\cite[Theorems~1 and~2]{Farokhi_PRA},
and post-processing and choice of $Z$ from~\cite[Theorem~1]{issa2019operational}.\end{proof}

\begin{remark}
The factor $2^{\mathcal{Q}(X\!\to\!A)_\rho}$ quantifies the multiplicative improvement in accuracy over the best constant estimator $\widehat{Z}=z^*:= \argmax_z\mathbb{P}\{Z=z\}$. The best constant estimator is the best policy that guesses $Z$ without any side information, i.e., without access to measurements of $X$ via quantum or classical measurements. This is referred to as the maximum \textit{a priori} estimator. This term is an indicator of the `difficulty' of inference problem in general. Importantly, $\mathcal{Q}(X\!\to\!A)_\rho$ depends on the encoding $\Rho$ but not on the inference objective $Z$ nor on the joint distribution $\mathbb{P}_{X,Z}$. This universality motivates maximizing $\mathcal{Q}(X\!\to\!A)_\rho$ over $\Rho$ to find the `best' quantum encoding policy.
\end{remark}

\paragraph{Equivalence with minimum-error discrimination.}

The following theorem recasts the performance bound in Theorem~\ref{tho:stat_inference} in terms of quantum state discrimination. A version of this result, for general prior distributions, is established in~\cite{xiao2026maximal}. Here, we give an independent self-contained
proof for the uniform-prior case relevant here.

\begin{theorem}[Maximal Leakage As State Discrimination]
\label{thm:disc}
Let $q_x=1/N$ for all $x\in\X$ (uniform prior).
Then
\begin{align}\label{eqn:disc_equiv}
    \mathcal{Q}(X\!\to\!A)_\rho
    \;=\;
    \log \bigl(N\cdot P_{\rm guess}(\{1/N,\rho^x\}_{x\in\X})\bigr),
\end{align}
where $P_{\rm guess}(\{1/N,\rho^x\}_{x\in\X})$, defined in~\eqref{eqn:pguess}, is the optimal success probability in minimum-error state discrimination with states $\{\rho^x\}_{x\in\X}$ and equal priors.
\end{theorem}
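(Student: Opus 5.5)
We prove the identity by establishing two inequalities between $2^{\mathcal{Q}(X\to A)_\rho}=\sup_{\{F_y\}}\sum_{y}\max_x\trace(\rho^xF_y)$ and $N\cdot P_{\rm guess}(\{1/N,\rho^x\}_{x\in\X})=\max_{\{M_x\}}\sum_{x\in\X}\trace(M_x\rho^x)$. Since $\log$ is monotone, it suffices to show that these two quantities coincide. The idea is that a POVM with arbitrary outcome set can always be coarse-grained into an $N$-outcome discrimination POVM without decreasing the objective, and conversely.

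\emph{Step 1 ($\geq$).} Let $\{M_x\}_{x\in\X}$ be any discrimination POVM. Take $\Y=\X$ and $F_y=M_y$ in Definition~\ref{def:qml}. Then
\begin{align*}
\sum_{y\in\X}\max_{x\in\X}\trace(\rho^xM_y)\;\geq\;\sum_{y\in\X}\trace(\rho^yM_y).
\end{align*}
Taking the supremum over $\{M_x\}$ gives $2^{\mathcal{Q}}\geq N P_{\rm guess}$.

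\emph{Step 2 ($\leq$).} Fix an arbitrary POVM $\{F_y\}_{y\in\Y}$ with $\Y$ finite. For each $y$, choose a maximizer $x^*(y)\in\arg\max_x\trace(\rho^xF_y)$, breaking ties by a fixed ordering of $\X$. Define the coarse-grained operators
\begin{align*}
M_x:=\sum_{y:\,x^*(y)=x}F_y .
\end{align*}
Each $M_x$ is positive semidefinite, and $M_x=0$ if no $y$ maps to $x$. Because the sets $\{y: x^*(y)=x\}$ partition $\Y$, we have $\sum_x M_x=\sum_y F_y=I$. Linearity of the trace then gives
\begin{align*}
\sum_{y}\max_x\trace(\rho^xF_y)=\sum_y\trace\bigl(\rho^{x^*(y)}F_y\bigr)=\sum_x\trace(\rho^xM_x)\leq N\,P_{\rm guess}.
\end{align*}
Taking the supremum over all finite POVMs gives $2^{\mathcal{Q}}\leq NP_{\rm guess}$.

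The only real subtlety is bookkeeping. We must make sure that the supremum over POVMs with \emph{arbitrary finite} outcome sets cannot exceed the $N$-outcome optimum, which is exactly what the coarse-graining in Step 2 handles. We must also check that the maximum in \eqref{eqn:pguess} is attained, so that writing $\max$ is legitimate. Attainment holds because the set of $N$-outcome POVMs is compact and the objective is continuous. Combining Steps 1 and 2 and applying $\log$ yields \eqref{eqn:disc_equiv}.
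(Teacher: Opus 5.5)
Your proposal is correct and follows the paper's proof essentially verbatim. The paper likewise coarse-grains an arbitrary POVM via a decision rule $\delta(y)\in\argmax_x\trace(\rho^xF_y)$ to get $\leq$, and embeds a discrimination POVM as an $N$-outcome POVM for $\geq$ (your two steps simply appear in the opposite order). Your compactness argument for attainment of the maximum in \eqref{eqn:pguess} is a small addition that the paper leaves implicit.
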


\begin{proof}
\medskip\noindent
\textit{Step 1: LHS $\leq$ RHS in~\eqref{eqn:disc_equiv}.} From Definition~\ref{def:qml}, $$2^{\mathcal{Q}(X\!\to\!A)_\rho}= \sup_{\{F_y\}} \sum_{y} \max_{x} \trace(\rho^x F_y).$$ Let $\{F_y\}$ be an arbitrary POVM.
For each outcome $y$, define the decision rule
$\delta(y)\in\argmax_x\trace(\rho^x F_y)$.
Group the POVM elements by decision:
$M_x:=\sum_{y:\delta(y)=x}F_y$.
Then $\{M_x\}_{x\in\X}$ is a valid POVM, and
\begin{align*}
    \sum_y\max_x\trace(\rho^x F_y)
    &= \sum_y\trace(\rho^{\delta(y)}F_y)
    = \sum_x\trace\!\Bigl(\rho^x\sum_{y:\delta(y)=x}F_y\Bigr)
    = \sum_x\trace(\rho^x M_x).
\end{align*}
Hence $\sum_y\max_x\trace(\rho^x F_y)=N\cdot(1/N)\sum_x\trace(\rho^x M_x)\leq N\cdot P_{\rm guess}(\{1/N,\rho^x\}_{x\in\X})$.

\medskip\noindent
\textit{Step 2: RHS $\leq$ LHS in~\eqref{eqn:disc_equiv}.} Let $\{M_x\}_{x\in\X}$ be any POVM feasible for~\eqref{eqn:pguess}.
Treat each $M_x$ as a single-outcome POVM element indexed by $y=x$.
Then
\begin{align*}
    \sum_y\max_{\tilde{x}} \trace(\rho^{\tilde{x}}F_y)
    \;\geq\;
    \sum_x\trace(\rho^x M_x),
\end{align*}
so $2^{\mathcal{Q}(X\!\to\!A)_\rho}\geq N\cdot P_{\rm guess}(\{1/N,\rho^x\}_{x\in\X})$.

\medskip
Combining both steps yields the equivalence in~\eqref{eqn:disc_equiv}.
\end{proof}

\begin{lemma}[Improved Leakage Bound]
\label{thm:fund_bound}
For any pure-state encoding $\{\rho^x\}_{x\in\X}$ with $N>d$:
\begin{align}\label{eqn:fund_bound}
    \mathcal{Q}(X\!\to\!A)_\rho \;\leq\; \log (d).
\end{align}
\end{lemma}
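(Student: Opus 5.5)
The plan is to go through Theorem~\ref{thm:disc}, which gives
\[
2^{\mathcal{Q}(X\to A)_\rho}=N\cdot P_{\rm guess}(\{1/N,\rho^x\}_{x\in\X})=\max_{\{M_x\}}\sum_{x\in\X}\trace(M_x\rho^x).
\]
It therefore suffices to show that $\sum_x \trace(M_x\rho^x)\le d$ for every POVM $\{M_x\}_{x\in\X}$ whenever each $\rho^x=\ket{\psi_x}\bra{\psi_x}$ is pure.

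The key step is an operator inequality for each term. Since $M_x\ge 0$ and $\ket{\psi_x}$ is a unit vector, we have
\[
\trace(M_x\rho^x)=\bra{\psi_x}M_x\ket{\psi_x}\le \lambda_{\max}(M_x)\le \trace(M_x).
\]
The last inequality holds because all eigenvalues of a positive semidefinite operator are nonnegative, so the largest one is at most their sum. Summing over $x$ and using $\sum_x M_x=I$ gives
\[
\sum_x\trace(M_x\rho^x)\le\sum_x\trace(M_x)=\trace(I)=d .
\]
Taking the maximum over POVMs and then the logarithm yields $\mathcal{Q}(X\to A)_\rho\le\log(d)$.

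The hypothesis $N>d$ is not used in this argument. It only marks the regime where the bound improves on the known bound $\min\{\log(N),2\log(d)\}$, since $\log(d)<\log(N)$ there. I would state this in a remark.

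I would also verify the step $\bra{\psi}M\ket{\psi}\le\trace(M)$ for $M\ge0$. One way is to write $\trace(M)=\sum_k\bra{e_k}M\ket{e_k}$ in an orthonormal basis with $e_1=\psi$; each term is nonnegative, so the sum is at least the first term. This is the only substantive point, and it is routine, so I expect no real obstacle.

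Optionally, I would show the bound is attained when $N\ge d$. Choose a unit-norm tight frame, so that $\sum_x\ket{\psi_x}\bra{\psi_x}=(N/d)I$, and take $M_x=(d/N)\ket{\psi_x}\bra{\psi_x}$. Then $\sum_x\trace(M_x\rho^x)=N\cdot(d/N)=d$. This anticipates the later tight-frame optimality results, but it is not needed for the lemma.
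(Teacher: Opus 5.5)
Your proof is correct and follows the paper's argument: reduce via Theorem~\ref{thm:disc} to bounding $\sum_x\trace(M_x\rho^x)$, bound each term by $\trace(M_x)$ using purity and positivity, and sum to $\trace(I)=d$. The only difference is in the per-term step, which you justify directly via $\bra{\psi_x}M_x\ket{\psi_x}\le\lambda_{\max}(M_x)\le\trace(M_x)$ where the paper cites $\trace(AB)\le\trace(A)\trace(B)$; your remark that the hypothesis $N>d$ is never used applies equally to the paper's proof.
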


\begin{proof}
By Theorem~\ref{thm:disc}, it suffices to show $P_{\rm guess}(\{1/N,\rho^x\}_{x\in\X})\leq d/N$.
For any POVM $\{M_x\}_{x\in\X}$ and any pure state $\rho^x=\ket{\psi_x}\bra{\psi_x}$ with unit vector $\ket{\psi_x}$,
$\trace(\rho^x M_x)\leq \trace(\rho^x)\trace(M_x)=\bra{\psi_x}\ket{\psi_x}\trace(M_x)=\trace(M_x)$, where the inequality follows from the trace relationship $\trace(AB)\leq \trace(A)\trace(B)$ for positive semi-definite operators~\cite{SHEBRAWI_ALBADAWI_2013}. Summing over $x$ and using $\sum_x\trace(M_x)=\trace(I)=d$, we get
\begin{align*}
    P_{\rm guess}(\{1/N,\rho^x\}_{x\in\X})
    = \frac{1}{N}\sum_x \trace(\rho^x M_x)
    \leq \frac{1}{N}\sum_x\trace(M_x)
    = \frac{d}{N}.
\end{align*}
Hence $\mathcal{Q}(X\!\to\!A)_\rho =\log(NP_{\rm guess}(\{1/N,\rho^x\}_{x\in\X}))\leq\log (d)$.
\end{proof}

\begin{remark} \label{remark:tighter}
Lemma~\ref{thm:fund_bound} sharpens the existing bound
$\mathcal{Q}\leq\min\{\log (N),2\log (d)\}$~\cite[Proposition~2]{Farokhi_PRA} to $\mathcal{Q}\leq\min\{\log (N),\log (d)\}$ for pure states. 
\end{remark}

\begin{corollary}[Number of Needed Qubits]
\label{cor:stats_inference}
For any quantum inference procedure with pure-state encoding,
\begin{align}
    \mathbb{P}\{\widehat{Z}=Z\}
    \leq
    \min \{N,d\}\max_{z}\mathbb{P}\{Z=z\}.
\end{align}
This bound is tight in the same sense as in Theorem~\ref{tho:stat_inference}.
\end{corollary}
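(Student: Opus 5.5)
The plan is to combine Theorem~\ref{tho:stat_inference} with the sharpened leakage bound of Lemma~\ref{thm:fund_bound} (and Remark~\ref{remark:tighter}) for the upper bound. Tightness then comes from the construction behind Theorem~\ref{tho:stat_inference}, applied to an encoding whose leakage actually reaches $\log\min\{N,d\}$.

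\textbf{Upper bound.} By Theorem~\ref{tho:stat_inference}, $\mathbb{P}\{\widehat{Z}=Z\}\leq 2^{\mathcal{Q}(X\to A)_\rho}\max_z\mathbb{P}\{Z=z\}$, so it suffices to show $2^{\mathcal{Q}(X\to A)_\rho}\leq\min\{N,d\}$ for pure-state encodings. I will split into two cases.
\begin{itemize}
\item If $N>d$, Lemma~\ref{thm:fund_bound} gives $\mathcal{Q}\leq\log(d)=\log\min\{N,d\}$.
\item If $N\leq d$, the general bound $\mathcal{Q}\leq\min\{\log(N),2\log(d)\}$ from \cite[Proposition~2]{Farokhi_PRA} gives $\mathcal{Q}\leq\log(N)=\log\min\{N,d\}$. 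Alternatively, Theorem~\ref{thm:disc} together with $P_{\rm guess}\leq 1$ gives the same conclusion directly.
\end{itemize}
Exponentiating in either case gives $2^{\mathcal{Q}}\leq\min\{N,d\}$, which yields the inequality. Note that the argument of Lemma~\ref{thm:fund_bound} never used $N>d$, so the bound $P_{\rm guess}\leq d/N$ holds for every $N$. This gives a uniform proof: $2^{\mathcal{Q}}=N P_{\rm guess}\leq \min\{N\cdot 1,\, N\cdot d/N\}$.

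\textbf{Tightness.} The equality case of Theorem~\ref{tho:stat_inference} gives equality with the factor $2^{\mathcal{Q}(X\to A)_\rho}$. To obtain $\min\{N,d\}$ itself, I first need a pure-state encoding with $2^{\mathcal{Q}}=\min\{N,d\}$, and then apply the same construction ($\N=\mathrm{id}$, the optimal POVM, and the $Z,\gamma$ of \cite[Theorem~1]{issa2019operational}).
\begin{itemize}
\item If $N\leq d$, take orthonormal states $\ket{\psi_x}=\ket{x}$ (basis encoding). The projective measurement onto them gives $P_{\rm guess}=1$, so $2^{\mathcal{Q}}=N$ by Theorem~\ref{thm:disc}.
\item If $N>d$, choose a unit-norm tight frame, e.g.\ $d$ basis vectors repeated or, for general $N$, a harmonic frame, so that $\sum_x\ket{\psi_x}\bra{\psi_x}=(N/d)I$. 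Take $M_x=(d/N)\ket{\psi_x}\bra{\psi_x}$; this is a valid POVM and gives $P_{\rm guess}\geq \frac1N\sum_x \frac dN=d/N$. Combined with the upper bound, $2^{\mathcal{Q}}=d$.
\end{itemize}

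The main obstacle is making the tightness claim precise, because ``tight in the same sense'' depends on the existence of such an encoding. The harmonic-frame construction for arbitrary $N>d$ (rows of a DFT matrix restricted to $d$ columns, normalized) is the step to check. Everything else is a direct chaining of earlier results.
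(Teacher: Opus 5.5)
Your proposal is correct and follows the paper's route. You chain Theorem~\ref{tho:stat_inference} with the pure-state bound $\mathcal{Q}(X\to A)_\rho\leq\log\min\{N,d\}$ from Lemma~\ref{thm:fund_bound} (whose argument indeed never uses $N>d$), and you obtain tightness from the equality case of Theorem~\ref{tho:stat_inference} applied to an encoding that saturates this bound. Your tightness step is more complete than the paper's, which cites only orthogonal states, and these cannot exist when $N>d$; you instead use the phase-encoding tight frame, which Proposition~\ref{prop:phase} already verifies, so the step you flagged is settled. One small correction: repeating basis vectors gives a tight frame only when $d\mid N$, although cycling through the basis and measuring in it still gives $P_{\rm guess}=d/N$ for every $N>d$.
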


\begin{proof}
The proof follows from substituting the bound $\mathcal{Q}(X\!\to\!A)_\rho\leq\min\{\log( N),\log (d)\}$ in Remark~\ref{remark:tighter}
into Theorem~\ref{tho:stat_inference}. The tightness stems from tightness of Theorem~\ref{tho:stat_inference} and the saturation of the bound in Remark~\ref{remark:tighter} for pure state encoding with orthogonal states. 
\end{proof}

Note that, in Corollary~\ref{cor:stats_inference}, $\dim(\H)=d$ captures the dimension of the quantum system used for statistical inference. The number of utilized qubits, if a qubit arrangement is used to create this space, is $\lceil \log_2(d)\rceil$.  If $d<N$, the upper bound in Corollary~\ref{cor:stats_inference} is unnecessarily reduced by the dimension of the quantum system. The tightness of Corollary~\ref{cor:stats_inference} implies that there exists, at least, one inference problem for which the performance can be improved by increasing the dimension of the underlying quantum system. This points to that the minimum number of qubits required for accurately solving a generic inference problem must be above $\log_2(N)$. We are not asserting that $\log_2(N)$ is the optimal number of required qubits, but that this is a lower bound for how many qubits are needed to solve the most `complicated' inference problems effectively (in the sense that the inference quality cannot be improved by increasing the dimension of the underlying Hilbert space). Interestingly, the only thing that matters, in this observation, is the size of the support set of the input $X$ (not its distribution, not the output $Z$, not the quantum computing method used, and not the classical post-processing procedure implemented). Therefore, this observation is rather universal.

\section{Optimal Universal Encoding}
\label{sec:optimal}
The upper bound in Theorem~\ref{tho:stat_inference}, which is a function of the maximal quantum leakage $\mathcal{Q}(X\rightarrow A)_{\rho}$, only depends on the quantum encoding of the classical data denoted by $\Rho$. This bound is also tight in the sense that it is saturated for at least one inference task. Therefore, maximizing $\mathcal{Q}(X\rightarrow A)_{\rho}$ provides a good universal encoding policy. This encoder can unlock the barrier in achieving a high accuracy in quantum-assisted statistical inference by increasing the upper bound in Theorem~\ref{tho:stat_inference}. Maximizing the upper bound in~\eqref{eqn:inequality}, via maximizing $\mathcal{Q}(X\rightarrow A)_{\rho}$, does not make the bound looser as this bound is always attained for at least one inference problem. The universal optimal encoder is given by
\begin{align}\label{eqn:max_encoding}
    \argmax_{\rho^x\in \mS(\H),\forall x\in\X}
    \mathcal{Q}(X\!\to\!A)_\rho.
\end{align}
In the next proposition, we prove that this optimization problem attains its maximum over pure states. This is an important revelation as most quantum computing platforms and procedures rely on pure states. 

\begin{proposition}[Pure States Are Optimal]
\label{prop:maximal}
The maximum of~\eqref{eqn:max_encoding} is attained by a pure-state encoding.
\end{proposition}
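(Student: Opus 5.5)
We use Theorem~\ref{thm:disc} to rewrite~\eqref{eqn:max_encoding} as the maximization of
\[
f(\{\rho^x\}) := \max_{\{M_x\}}\sum_{x\in\X}\trace(M_x\rho^x)
\]
over all encodings, since $\mathcal{Q}(X\!\to\!A)_\rho=\log\bigl(f(\{\rho^x\})\bigr)$ and $\log$ is increasing. The plan rests on two observations: $f$ is convex in each $\rho^x$ separately, and the state space $\mS(\H)$ is compact with the pure states as its extreme points.

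\textbf{Step 1 (the maximum exists).} For each fixed POVM, $\sum_x \trace(M_x\rho^x)$ is linear in $(\rho^x)_x$. Hence $f$ is a pointwise supremum of affine functions, so it is convex and lower semicontinuous. The supremum over POVMs is attained because the set of $N$-outcome POVMs is compact, so $f$ is also continuous. The domain $\mS(\H)^N$ is compact, so a maximizer $\{\rho^x_\star\}$ exists.

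\textbf{Step 2 (purify one state at a time).} Fix a maximizer and a particular index $x_0$. Write $\rho^{x_0}_\star=\sum_k \lambda_k \ket{\phi_k}\bra{\phi_k}$ as its spectral decomposition. Let $f_{x_0}$ denote $f$ viewed as a function of $\rho^{x_0}$ alone, with the other states held fixed. By convexity of $f_{x_0}$,
\[
f(\rho^{x_0}_\star,\ldots)\le \sum_k\lambda_k f(\ket{\phi_k}\bra{\phi_k},\ldots)\le \max_k f(\ket{\phi_k}\bra{\phi_k},\ldots).
\]
So replacing $\rho^{x_0}_\star$ with the best eigenprojector $\ket{\phi_{k^\star}}\bra{\phi_{k^\star}}$ does not decrease $f$. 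Since $\{\rho^x_\star\}$ was already a maximizer, the new encoding is also a maximizer. We repeat this for every $x\in\X$ in turn. Each replacement keeps the encoding optimal, so after $N$ steps we obtain an optimal encoding consisting only of pure states.

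\textbf{Main obstacle.} The only delicate point is justifying convexity and attainment. Convexity follows cleanly once we use the $P_{\rm guess}$ form from Theorem~\ref{thm:disc}, since it is a supremum of linear functionals. Working directly with the $\log$ of a supremum over POVMs with arbitrary finite outcome sets would be less clean. An alternative check is to use the fact that the optimal value is at most $\min\{N,d\}$, which is achieved by orthonormal pure states, or by the pure tight frames constructed later. The convexity argument above avoids needing that explicit construction, so I would present it as the main proof.
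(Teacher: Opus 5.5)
Your argument is correct, but it reaches the conclusion by a different route than the paper. The paper proves joint convexity of $g$ directly from Definition~\ref{def:qml}, where the supremum runs over POVMs with arbitrary finite outcome sets. It then cites Bauer's maximum principle on the compact convex set $\mS(\H)^N$, together with the fact that the extreme points of $\mS(\H)$ are the pure states.

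You instead use Theorem~\ref{thm:disc} to restrict to $N$-outcome POVMs. This makes the index set compact, which gives continuity of $f$ and existence of a maximizer. You then purify one codeword at a time using the spectral decomposition and Jensen's inequality.

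Your version is more elementary and constructive: a pure maximizer is obtained by keeping the best eigenvector of each $\rho^x_\star$. It avoids both Bauer's theorem and the identification of the extreme points of the product $\mS(\H)^N$. It also supplies the upper-semicontinuity and existence hypothesis that Bauer's principle requires. The paper leaves that hypothesis implicit, and it is not obvious from the definition with unbounded outcome sets. The paper's route is shorter because it hands this work to a general theorem.

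Two small refinements. First, attainment of the inner maximum does not by itself give continuity, so justify it directly, e.g.\ $|f(\{\rho^x\})-f(\{\sigma^x\})|\le\max_{\{M_x\}}\bigl|\sum_x\trace\bigl(M_x(\rho^x-\sigma^x)\bigr)\bigr|\le\sum_x\|\rho^x-\sigma^x\|_1$, since $0\le M_x\le I$.

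Second, your alternative check is also a valid and even shorter proof, provided you note one fact. The bound $\trace(M_x\rho^x)\le\trace(M_x)$ in the proof of Lemma~\ref{thm:fund_bound} holds for every density operator, not only pure ones. Then $f\le\min\{N,d\}$ over all encodings, and basis encoding (Proposition~\ref{prop:index}) or a tight frame (Lemma~\ref{thm:tight}) attains this with pure states. Since Lemma~\ref{thm:fund_bound} is stated only for pure encodings, you would need to make that extension explicit.
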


\begin{proof}
Because $\log(\cdot)$ is strictly increasing,~\eqref{eqn:max_encoding} is
equivalent to maximizing
$g(\{\rho^x\}_{x\in\X}):=\sup_{\{F_y\}}\sum_y\max_x\trace(\rho^x F_y)$.
It is easy to see that $g:\mS(\H)^{N}\rightarrow \mathbb{R}$ is convex because
    \begin{align*}
        g(\{\alpha \rho^x+(1-\alpha)\sigma^x\}_{x\in\X})
        &= \sup_{\{F_y\}_{y\in\Y}}\sum_{y\in\Y} \max_{
         x\in\X} \trace((\alpha\rho^x+(1-\alpha)\sigma^x) F_y) 
         \\&\leq \alpha\sup_{\{F_y\}_{y\in\Y}}\sum_{y\in\Y} \max_{
         x\in\X} \trace(\rho^x F_y)+\!(1\!-\!\alpha)\!\!\sup_{\{F_y\}_{y\in\Y}}\sum_{y\in\Y} \max_{
         x\in\X} \trace(\sigma^x F_y) 
          \\&\leq \alpha g(\{\rho^x\}_{x\in\X})
         +(1-\alpha)g(\{\sigma^x\}_{x\in\X}).
    \end{align*}
By the Bauer's maximum principle~\cite[Theorem~3.5.29]{denkowski2003introduction}, originally proved in~\cite{bauer1958minimalstellen},
$g$ attains its maximum at an extreme point of $\mS(\H)^{N}$.
The extreme points of $\mS(\H)$ are the pure states~\cite[Theorem~2.3]{Holevo2013}.
\end{proof}

Note that Proposition~\ref{prop:maximal} does not claim that the solution is unique. The problem~\eqref{eqn:max_encoding} might admit several solutions but at least one of those solutions involves pure states for encoding classical data. All the optimal solutions have the same maximal quantum leakage.

\begin{corollary}[Optimal Encoding Reformulation]
\label{cor:disc}
Maximizing $\mathcal{Q}(X\!\to\!A)_\rho$ over the encoding $\{\rho^x\}_{x\in\X}$, formulated in~\eqref{eqn:max_encoding}, is equivalent
to designing pure-state encodings  $\{\rho^x\}_{x\in\X}$ that maximize the minimum-error discrimination success probability with equal priors.
\end{corollary}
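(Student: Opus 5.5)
The corollary follows by chaining Theorem~\ref{thm:disc} and Proposition~\ref{prop:maximal}. Nothing new needs to be computed.

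First, use Theorem~\ref{thm:disc} for an arbitrary encoding $\{\rho^x\}_{x\in\X}$. It gives $2^{\mathcal{Q}(X\!\to\!A)_\rho}=N\cdot P_{\rm guess}(\{1/N,\rho^x\}_{x\in\X})$. The map $t\mapsto\log(Nt)$ is strictly increasing on $(0,\infty)$, and $P_{\rm guess}\geq 1/N>0$, as the trivial POVM $M_x=I/N$ shows. Hence for any two encodings $\Rho$ and $\Rho'$, $\mathcal{Q}(X\!\to\!A)_\rho\geq\mathcal{Q}(X\!\to\!A)_{\rho'}$ holds if and only if $P_{\rm guess}(\{1/N,\rho^x\})\geq P_{\rm guess}(\{1/N,\rho'^x\})$. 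The two objective functions therefore induce the same ordering on $\mS(\H)^N$. In particular, the problem~\eqref{eqn:max_encoding} and the problem of maximizing $P_{\rm guess}(\{1/N,\rho^x\}_{x\in\X})$ over $\mS(\H)^N$ have identical argmax sets and the same optimal values, up to the transformation $\log(N\,\cdot)$.

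Second, restrict to pure states using Proposition~\ref{prop:maximal}. The maximum of~\eqref{eqn:max_encoding} is attained at some pure-state encoding. The same convexity argument applies directly to $P_{\rm guess}$, since $P_{\rm guess}$ is a supremum of linear functionals of $(\rho^x)_x$. Either way, the supremum over pure-state encodings equals the supremum over all of $\mS(\H)^N$. Restricting the search to pure states thus loses no optimal value. Moreover, any pure-state maximizer of $P_{\rm guess}$ is a global maximizer of $\mathcal{Q}$, and conversely.

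The only point requiring care, which is mild, is what ``attained'' means. $\mS(\H)^N$ is compact and $g$ is convex, hence continuous on the interior, and it is lower semicontinuous as a supremum of continuous functions. To guarantee a maximizer exists, I would add that $P_{\rm guess}$ is in fact continuous: it is a maximum of a jointly continuous function over the compact set of $N$-outcome POVMs. This justifies invoking Bauer's principle and yields the equivalence.
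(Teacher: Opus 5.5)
Your proposal is correct and follows the paper's implicit argument: the paper gives no separate proof, and the corollary is exactly Theorem~\ref{thm:disc} (which makes $\mathcal{Q}$ a strictly increasing function of $P_{\rm guess}$) combined with Proposition~\ref{prop:maximal} (the maximum is attained at pure states). Your remark that continuity, hence upper semicontinuity, of $P_{\rm guess}$ is what actually licenses Bauer's maximum principle adds rigor that the paper's proof of Proposition~\ref{prop:maximal} leaves implicit.
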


When considering pure state encodings $\mathcal{R}=\{\rho^x\}_{x\in\X}$ with $\rho^x=\ket{\psi_x}\bra{\psi_x}$, $\forall x\in\X$, with slight abuse of notation, we refer to $\{\ket{\psi_x}\}_{x\in\X}$ as the state encoding. We consider several examples achieving the optimal encoding.

\begin{proposition}[Optimality of Basis Encoding]
\label{prop:index}
If $d\geq N$, the maximum of~\eqref{eqn:max_encoding} is $\log (N)$, attained by the \emph{basis encoding} $\{\ket{\tau(x)}\}_{x\in\X}$, where $\{\ket{i}\}_{i=0,\dots,d-1}$ is any orthonormal basis for $\H$ and
$\tau:\X\to\{0,\dots,N-1\}$ is any injective map.
\end{proposition}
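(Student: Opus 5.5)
The plan is to show two matching bounds: an upper bound $\mathcal{Q}(X\!\to\!A)_\rho\le\log(N)$ valid for every encoding, and achievability of $\log(N)$ by the basis encoding.

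For the upper bound, I will use Theorem~\ref{thm:disc} to write $2^{\mathcal{Q}(X\to A)_\rho}=N\cdot P_{\rm guess}(\{1/N,\rho^x\}_{x\in\X})$. A success probability never exceeds one, so $P_{\rm guess}\le 1$. Concretely, $\trace(\rho^x M_x)\le 1$ for each $x$, hence $\frac1N\sum_x\trace(\rho^x M_x)\le 1$. This gives $\mathcal{Q}\le\log(N)$ for every encoding, pure or mixed. Alternatively, one can cite the known bound $\mathcal{Q}\le\min\{\log(N),2\log(d)\}$ from~\cite[Proposition~2]{Farokhi_PRA}, which under $d\ge N$ reduces to $\log(N)$.

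For achievability, I will use $d\ge N$ to guarantee that an injective $\tau:\X\to\{0,\dots,N-1\}\subseteq\{0,\dots,d-1\}$ exists. The encoding is $\rho^x=\ket{\tau(x)}\bra{\tau(x)}$, and the POVM is
\begin{align*}
M_x=\ket{\tau(x)}\bra{\tau(x)}\ \text{for } x\neq x_0,\qquad M_{x_0}=\ket{\tau(x_0)}\bra{\tau(x_0)}+\Bigl(I-\sum_{x}\ket{\tau(x)}\bra{\tau(x)}\Bigr),
\end{align*}
where $x_0\in\X$ is a fixed element that absorbs the projector onto the orthogonal complement of the used basis vectors. Each $M_x$ is positive semidefinite, and $\sum_x M_x=I$, so $\{M_x\}$ is a valid POVM. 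Orthonormality gives $\trace(\rho^x M_x)=1$ for every $x$, so $P_{\rm guess}=1$ and, by Theorem~\ref{thm:disc}, $\mathcal{Q}=\log(N)$.

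Combining the two bounds shows that the maximum of~\eqref{eqn:max_encoding} equals $\log(N)$ and is attained by the basis encoding. I do not expect a real obstacle. The only points needing care are that the POVM must sum to $I$ when $d>N$, which the complement term handles, and that the upper bound must hold over all (possibly mixed) encodings, which the argument above covers without restricting to pure states.
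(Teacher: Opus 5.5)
Your proposal is correct and follows essentially the same route as the paper. Both arguments match the universal upper bound $\log(N)$ with the basis encoding measured in its own basis. The paper simply cites that bound from \cite[Proposition~2]{Farokhi_PRA}; you also derive it directly via Theorem~\ref{thm:disc} and $P_{\rm guess}\le 1$.

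Your one refinement is worth keeping. The paper takes $F_y=\rho^y$ for $y\in\X$, which sums to $I$ only when $d=N$. Your absorption of the complement projector $I-\sum_x\ket{\tau(x)}\bra{\tau(x)}$ into $M_{x_0}$ makes the measurement a genuine POVM for $d>N$ without changing the value $N$.
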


\begin{proof}
Note that $\mathcal{Q}(X\rightarrow A)_\rho\leq \log_2(N)$ irrespective of $\{\rho^x\}_{x\in\X}$~\cite[Proposition~2]{Farokhi_PRA}. Let $\rho^x=\ket{\tau(x)} \bra{\tau(x)}$ for all $x\in\X$. Fix $\Y=\X$ and $F_y=\rho^y$ for all $y\in\Y$. We get $\sum_{y\in\Y} \max_{{
         x\in\X: 
         \mathbb{P}\{X=x\}>0}
    } \trace(\rho^x F_y)=N$, which attains $\mathcal{Q}(X\rightarrow A)_\rho= \log_2(N)$.
\end{proof}

Basis encoding, also called index encoding~\cite{10555535110653511068}, is thus not merely a practical convenience but the provably optimal universal encoder
whenever the Hilbert space is large enough to accommodate orthogonal codewords. The interesting and more subtle case is $N>d$, to which we now turn. The following lemma transforms the problem of finding an optimal encoding to an algebraic condition. 

\begin{lemma}[Optimality of Tight Frames]
\label{thm:tight} Assume $N>d$.
Let $\{\ket{\psi_x}\}_{x\in\X}$ in $\H$ be a unit-norm tight frame, i.e., $S := \sum_{x\in\X} \ket{\psi_x}\bra{\psi_x}=NI_{d}/d $. Consider state encoding $\rho^x=\ket{\psi_x}\bra{\psi_x}$ for all $x\in\X$. Then, $P_{\rm guess}(\{1/N,\rho^x\}_{x\in\X})=d/N$ and $\mathcal{Q}(X\!\to\!A)_\rho=\log (d)$.
\end{lemma}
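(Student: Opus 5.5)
The plan is to sandwich $P_{\rm guess}(\{1/N,\rho^x\}_{x\in\X})$ between $d/N$ from above and from below.

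The upper bound is already available. Since $N>d$ and the encoding consists of pure states, the proof of Lemma~\ref{thm:fund_bound} gives $P_{\rm guess}(\{1/N,\rho^x\}_{x\in\X})\leq d/N$ for every POVM, hence also at the optimum. Nothing further is needed for this direction.

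For the lower bound, I will exhibit the \emph{pretty good} (square-root) measurement built from the tight frame itself, $M_x := \frac{d}{N}\ket{\psi_x}\bra{\psi_x}$ for $x\in\X$. Each $M_x$ is positive semi-definite. The tight frame condition $S=NI_d/d$ gives
\begin{align*}
\sum_{x\in\X} M_x=\frac{d}{N}S=I,
\end{align*}
so $\{M_x\}_{x\in\X}$ is a feasible POVM for~\eqref{eqn:pguess}. Using $\bra{\psi_x}\ket{\psi_x}=1$, its success probability is
\begin{align*}
\frac{1}{N}\sum_{x}\trace(\rho^x M_x)=\frac{1}{N}\sum_x \frac{d}{N}\,|\bra{\psi_x}\ket{\psi_x}|^2=\frac{d}{N}.
\end{align*}
Hence $P_{\rm guess}(\{1/N,\rho^x\}_{x\in\X})\geq d/N$.

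Combining the two directions yields $P_{\rm guess}(\{1/N,\rho^x\}_{x\in\X})=d/N$. Theorem~\ref{thm:disc} then gives $\mathcal{Q}(X\!\to\!A)_\rho=\log(N\cdot d/N)=\log(d)$. By Lemma~\ref{thm:fund_bound} and Proposition~\ref{prop:maximal}, this value is also the maximum of~\eqref{eqn:max_encoding}.

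There is no real obstacle in this argument. The only substantive step is choosing the measurement, and the point to verify is that the normalization $d/N$ is exactly what the tight frame identity requires for $\sum_x M_x=I$.
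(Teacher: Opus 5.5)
Your proposal is correct and matches the paper's proof: the POVM $M_x=(d/N)\ket{\psi_x}\bra{\psi_x}$ attains $d/N$, which equals the upper bound from Lemma~\ref{thm:fund_bound}, and Theorem~\ref{thm:disc} then gives $\log(d)$. Calling it the pretty good measurement is also accurate, since the average state is $I/d$; your closing remark that $\log(d)$ is the global maximum correctly uses Proposition~\ref{prop:maximal}, though the lemma itself does not require it.
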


\begin{proof} For the tight frame with $S=(N/d)I_{d}$, $\{M_x^*\}_{x\in\X}$ with $M_x^*=(d/N)\ket{\psi_x}\bra{\psi_x}$
forms a POVM because
\begin{align*}
    \sum_x M_x^* = \frac{d}{N}\sum_x|\psi_x\rangle\langle\psi_x|=\frac{d}{N}\cdot\frac{N}{d}I_{d}=I_{d},\quad M_x^*\geq0.
\end{align*}
The success probability is $(1/N)\sum_x\langle\psi_x|M_x^*|\psi_x\rangle=(1/N)(d/N)\sum_x(\bra{\psi_x}\ket{\psi_x})^2=d/N.$
Since this equals the upper bound in Lemma~\ref{thm:fund_bound}, it must be the optimal probability. The rest follows from Theorem~\ref{thm:disc}.
\end{proof}

\begin{remark}[Average State Is Maximally Mixed]
\label{rem:avg}
A tight frame satisfies $(1/N)\sum_{x\in\X}\rho^x=(1/N)\sum_x\ket{\psi_x}\bra{\psi_x}=I/d$,
meaning the ensemble-averaged state, under uniform prior, is the maximally mixed state.
From a third party perspective, without knowing which codeword was sent, the average quantum state carries no information about $x$. 
\end{remark}

\begin{proposition}[Optimality of Phase Encoding]
\label{prop:phase}
If $N\geq d$, the maximum of~\eqref{eqn:max_encoding} is $\log (d)$, attained by the \emph{phase encoding} $\{\ket{\psi_x}\}_{x\in\X}$, where $\ket{\psi_x} = \frac{1}{\sqrt{d}}\sum_{j=0}^{d-1} e^{2\pi i \tau(x)j/N}\,\ket{j}$ and $\tau:\X\to\{0,\dots,N-1\}$ is any injective map.
\end{proposition}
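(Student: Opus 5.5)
The plan is to combine an upper bound valid for every encoding with the tight-frame attainment result. First I will show that no encoding can exceed $\log(d)$ when $N\geq d$. By Proposition~\ref{prop:maximal}, the maximum of~\eqref{eqn:max_encoding} is attained by some pure-state encoding. For pure states with $N>d$, Lemma~\ref{thm:fund_bound} gives $\mathcal{Q}(X\!\to\!A)_\rho\leq\log(d)$, so the maximum over all encodings is at most $\log(d)$. In the boundary case $N=d$, the general bound $\mathcal{Q}\leq\log(N)=\log(d)$ from~\cite[Proposition~2]{Farokhi_PRA} gives the same conclusion. (The trace argument in the proof of Lemma~\ref{thm:fund_bound} also never uses $N>d$.)

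Second, I will show that the phase encoding attains $\log(d)$ by checking that it is a unit-norm tight frame, so that Lemma~\ref{thm:tight} applies. Each $\ket{\psi_x}$ has norm $\frac{1}{d}\sum_{j}|e^{2\pi i\tau(x)j/N}|^2=1$. For the frame operator, the matrix entries are
\begin{align*}
\bra{j}S\ket{k}=\frac{1}{d}\sum_{x\in\X}e^{2\pi i\tau(x)(j-k)/N}.
\end{align*}
Since $\tau$ is injective from a set of size $N$ into $\{0,\dots,N-1\}$, it is a bijection. The sum therefore runs over all $t\in\{0,\dots,N-1\}$ of $e^{2\pi i t(j-k)/N}$. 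This sum equals $N$ if $j=k$. If $j\neq k$ it vanishes, because $0<|j-k|\leq d-1<N$, so $e^{2\pi i(j-k)/N}$ is a nontrivial $N$-th root of unity and the geometric series sums to zero. Hence $S=(N/d)I_d$.

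For $N>d$, Lemma~\ref{thm:tight} then gives $\mathcal{Q}(X\!\to\!A)_\rho=\log(d)$ directly. For $N=d$, the same computation shows that the $\ket{\psi_x}$ are orthonormal, since they form the discrete Fourier basis. The argument of Proposition~\ref{prop:index} then gives leakage $\log(N)=\log(d)$; alternatively, the POVM $M_x=\ket{\psi_x}\bra{\psi_x}$ achieves $P_{\rm guess}=1=d/N$.

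The only genuinely delicate step is the root-of-unity cancellation, specifically checking that $|j-k|<N$ so the root is nontrivial; this is exactly where $N\geq d$ is used. The edge case $N=d$, which lies outside the hypotheses of Lemmas~\ref{thm:fund_bound} and~\ref{thm:tight}, needs the separate handling described above.
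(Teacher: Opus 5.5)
Your proposal is correct and follows the paper's proof: you verify unit norm and $S=(N/d)I_d$ by the discrete orthogonality of roots of unity (using that $\tau$ is a bijection), then invoke Lemma~\ref{thm:tight}. Your explicit upper bound over all, possibly mixed, encodings via Proposition~\ref{prop:maximal} and Lemma~\ref{thm:fund_bound}, and your separate treatment of the case $N=d$, fill in points the paper leaves implicit.
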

 
\begin{proof}
Without loss of generality assume that $\X:=\{0,\dots,N-1\}$ (by utilizing the injective map $\tau$). 
The $(j,k)$-th entry of $S$ is
\begin{equation}
    S_{jk}
    = \sum_{x=0}^{N-1}
      \frac{e^{2\pi ixj/N}\,e^{-2\pi ixk/N}}{d}
    = \frac{1}{d}\sum_{x=0}^{N-1} e^{2\pi ix(j-k)/N}
    =
    \begin{cases}
        N/d, & j=k,\\
        0, & j\neq k,
    \end{cases}
    \label{eqn:phase_Sjk}
\end{equation}
where the last equality follows from discrete orthogonality of complex exponentials. This shows that $\{\ket{\psi_x}\}_{x\in\X}$ is a tight frame with $S := \sum_{x\in\X}\ket{\psi_x}\bra{\psi_x} = (N/d)\,I_d.$ The frame is unit-norm since $\bra{\psi_x}\ket{\psi_x} 
= \frac{1}{d}\sum_{j=0}^{d-1}|e^{2\pi ixj/N}|^2 = \frac{1}{d}\cdot d = 1$.
Hence $\{\ket{\psi_x}\}$ is a unit-norm tight frame with bound $N/d$. Lemma~\ref{thm:tight} gives
$\mathcal{Q}(X\!\to\!A)_\rho = \log (d)$.
\end{proof}
  
\begin{remark}[Phase Encoding and the Quantum Fourier Transform]
\label{rem:phase_QFT}
The codewords $\ket{\psi_x}$ in Proposition~\ref{prop:phase} are the columns
of the $d\times N$ submatrix of the $N\times N$ Discrete Fourier Transform (DFT) matrix (the first $d$ rows). When $N=d$, the full $d\times d$ DFT matrix is unitary and phase encoding reduces to an orthonormal basis. In quantum computing, this is precisely the computational basis after applying the quantum Fourier transform. For $N>d$, the $d\times N$ DFT submatrix has orthogonal rows, and its
columns (the codewords) form the tight frame shown above. The tight frame property of DFT submatrices is a foundational result
in compressed sensing~\cite{candes2006stable} and explains why DFT-based measurements are near-universally useful for sparse recovery.
\end{remark}

Phase encoding is not the only tight frame. Therefore, the optimal encoding is not unique. Among all tight frames achieving $\mathcal{Q}(X\to A)_{\rho}=\log (d)$, some are more symmetric than others. In what follows, we consider pairwise similarity between codewords and present symmetric constructions when possible. 

\section{Equiangular Tight Frames}
\label{sec:constructions}

\subsection{The Welch bound}
The natural measure of pairwise similarity between codewords $\ket{\psi_x}$ and $\ket{\psi_{x'}}$, $x\neq x'$, is the squared overlap $|\bra{\psi_x}\ket{\psi_{x'}}|^2$, which is known as the pure state fidelity in quantum information theory~\cite[\chap\,9.2]{wilde2013quantum}. The following theorem gives a fundamental lower bound on the worst-case overlap. The proof is based on the seminal work of Welch in information theory in~\cite{welch1974lower} and is presented here for the sake of completeness. 

\begin{theorem}[Welch Bound For Tight Frames]
\label{thm:welch}
Assume $N>d$. If $\{\ket{\psi_x}\}_{x\in\X}$ is a unit-norm tight frame in $\H\cong\mathbb{C}^d$, then
\begin{align}\label{eqn:welch}
    \max_{x\neq x'}|\bra{\psi_x}\ket{\psi_{x'}}|^2
    \;\geq\;
    \frac{N-d}{d(N-1)}.
\end{align}
\end{theorem}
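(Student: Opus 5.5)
The plan is to compute the Frobenius norm of the frame operator in two ways and compare them via the Gram matrix. Let $G$ be the $N\times N$ Gram matrix with entries $G_{xx'}=\bra{\psi_x}\ket{\psi_{x'}}$.

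\textbf{Step 1: relate $G$ to $S$.} Since $G=\Psi^\dagger\Psi$ and $S=\Psi\Psi^\dagger$, where $\Psi$ is the $d\times N$ matrix whose columns are the codewords, cyclicity of the trace gives $\trace(G^2)=\trace(S^2)$. Expanding $\trace(G^2)$ entrywise then yields
\begin{align*}
    \sum_{x,x'\in\X}|\bra{\psi_x}\ket{\psi_{x'}}|^2=\trace(S^2).
\end{align*}

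\textbf{Step 2: evaluate the right-hand side.} The tight-frame assumption $S=(N/d)I_d$ gives $\trace(S^2)=N^2/d$. For a general unit-norm frame one would instead use Cauchy--Schwarz on the eigenvalues, $\trace(S^2)\geq (\trace S)^2/d=N^2/d$, so the bound holds even without tightness. Here the equality is available directly.

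\textbf{Step 3: split off the diagonal.} The $N$ diagonal terms each equal $1$ by unit norm. Therefore
\begin{align*}
    \sum_{x\neq x'}|\bra{\psi_x}\ket{\psi_{x'}}|^2=\frac{N^2}{d}-N=\frac{N(N-d)}{d}.
\end{align*}

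\textbf{Step 4: average to bound the maximum.} There are $N(N-1)$ ordered pairs with $x\neq x'$, and the maximum is at least the average. Dividing gives
\[
\max_{x\neq x'}|\bra{\psi_x}\ket{\psi_{x'}}|^2\geq \frac{N-d}{d(N-1)},
\]
which is~\eqref{eqn:welch}. The assumption $N>d$ guarantees that $N-1\geq 1$, so there are off-diagonal pairs, and that the bound is positive.

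The only step needing care is the identity $\trace(G^2)=\trace(S^2)$ in Step 1. Everything else is bookkeeping. The argument also shows when the bound is attained: equality holds exactly when all off-diagonal overlaps are equal, which is the equiangular case.
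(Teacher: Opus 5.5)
Your proposal is correct and follows the paper's proof essentially step for step: both compute $\trace(S^2)=N^2/d$, identify it with $\sum_{x,x'}|\bra{\psi_x}\ket{\psi_{x'}}|^2$, remove the $N$ unit diagonal terms, and bound the maximum by the average over the $N(N-1)$ ordered pairs. Your Gram-matrix justification of that identity, and your Cauchy--Schwarz remark that the inequality holds even without tightness, are correct details that the paper leaves implicit.
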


\begin{proof}
Noting $S=\sum_x \ket{\psi_x}\bra{\psi_x}=(N/d)I$, we get
\begin{align*}
    \frac{N^2}{d}=\trace(S^2)&= \sum_{x,x'}|\bra{\psi_x}\ket{\psi_{x'}}|^2
    = N + \sum_{x\neq x'}|\bra{\psi_x}\ket{\psi_{x'}}|^2.
\end{align*}
Hence $\sum_{x\neq x'} |\bra{\psi_x}\ket{\psi_{x'}}|^2=N(N-d)/d$.
Since there are $N(N-1)$ ordered pairs $(x,x')$ with $x\neq x'$:
\begin{align*}
    \max_{x\neq x'}|\bra{\psi_x}\ket{\psi_{x'}}|^2
    \;\geq\;
    \frac{N(N-d)/d}{N(N-1)}
    = \frac{N-d}{d(N-1)}.
\end{align*}
\end{proof}

\begin{remark}
    The proof of Theorem~\ref{thm:welch} shows that the average squared overlap for tight frames is always equal to 
    \begin{align*}
        \frac{1}{N(N-1)}\sum_{x\neq x'} |\bra{\psi_x}\ket{\psi_{x'}}|^2=\frac{N-d}{d(N-1)}.
    \end{align*}
    The average squared overlap tends to $(N-d)/[d(N-1)] \to 1/d$
     as $N \to \infty$.
\end{remark}

Equality in Theorem~\ref{thm:welch} holds if and only if all pairwise squared overlaps are equal:
$|\langle\psi_x|\psi_{x'}\rangle|^2=c^2:=(N-d)/[d(N-1)]$ for all $x\neq x'$. This motivates the following definition.

\begin{definition}[Equiangular Tight Frame (ETF)]
\label{def:ETF}
A unit-norm tight frame $\{\ket{\psi_x} \}_{x\in\X}$, for $N>d$, is an \emph{equiangular tight frame} (ETF) with parameters $(N,d)$ if it
saturates the Welch bound~\eqref{eqn:welch}, i.e., 
\begin{align}\label{eqn:etf_coh}
    |\bra{\psi_x}\ket{\psi_{x'}}| = c_{N,d}:=\sqrt{\frac{N-d}{d(N-1)}}
    \quad \forall\; x\neq x'.
\end{align}
The quantity $c_{N,d}$ is referred to as the \emph{coherence} of the ETF.
\end{definition}

ETFs are optimal quantum encodings in two complementary senses. They achieve the maximum leakage $\mathcal{Q}(X\to A)_\rho=\log (d)$ (tight unit-norm frame condition), and they simultaneously minimize the maximum pairwise overlap among all tight frames
(Welch bound saturation). A small coherence $c_{N,d}$ means the codewords are as ``spread out'' as possible in the Hilbert space $\H\cong\mathbb{C}^d$, which in turn means they are maximally distinguishable in a
minimax sense.

Table~\ref{tab:etf} summarises the key special cases of ETF encodings.
When $N=d$ the ETF degenerates to an orthonormal basis (the complete regime of
Proposition~\ref{prop:index}).
As $N$ increases past $d$, the coherence $c_{N,d}$ increases from $0$
(orthogonal codewords). We now provide explicit codeword sets for these parameter regimes.
Existence of an ETF$(N,d)$ is not guaranteed for all $(N,d)$; the problem is
connected to deep questions in combinatorics and algebraic number
theory~\cite{waldron2018frames, strohmer2003grassmannian}. 

\begin{table}[t]
\centering
\caption{Key ETF parameter regimes with $\mathcal{Q}(X\to A)_\rho=\log (d)$.}
\label{tab:etf}
\begin{tabular}{@{}llll@{}}
\toprule
Parameters & Structure & Coherence $c_{N,d}$ & $P_{\rm guess}$ \\
\midrule
$N=d$     & Orthonormal basis      & $0$          & $1$        \\
$N=d+1$   & Regular simplex        & $1/d$        & $d/(d+1)$  \\
$N=d^2$   & SIC-POVM               & $1/\sqrt{d+1}$ & $1/d$    \\
\bottomrule
\end{tabular}
\end{table}

\subsection{Regular simplex: \texorpdfstring{$N=d+1$}{N=d+1}}

The regular simplex ETF exists for every $d\geq1$ and corresponds to $d+1$ equidistant
points on the unit sphere in $\mathbb{C}^d$, i.e., the vertices of a regular simplex
inscribed in the sphere.

\begin{algorithm}[t]
\caption{\label{alg:simplex}Regular simplex construction for ETF$(d{+}1,\,d)$}
    \begin{algorithmic}[1]
    \Require $d$
    \Ensure {$\ket{\psi_k}\in\mathbb{C}^d$, $\forall k\in\{0,\dots,d\}$},
    \State Pick $\{\ket{e_k}\}_{k=0}^{d}$ as the basis for $\mathbb{C}^{d+1}$
    \State $\ket{u}\leftarrow \frac{1}{\sqrt{d+1}}\sum_{k=0}^{d}\ket{e_k}
    \;\in\;\mathbb{C}^{d+1},$
    \State $|\tilde{\psi}_k\rangle
    \leftarrow \sqrt{\frac{d+1}{d}}\,
    (I_{d+1} - \ket{u}\bra{u})\ket{e_k}
    \;\in\; \mathbb{C}^{d+1}, k = 0,\ldots,d$
    \State Set $Q\in\mathbb{C}^{(d+1)\times d}$ as any isometry with $\operatorname{col}(Q) = u^{\perp}$, i.e., \ $Q^{\dagger}Q = I_d$ and $QQ^{\dagger} = I_{d+1} - |u\rangle\langle u|$,
    \State $\ket{\psi_k} \leftarrow Q^{\dagger}|\tilde{\psi}_k\rangle
    \;\in\; \mathbb{C}^{d},
    k = 0,\ldots,d,$
    \State \Return $\ket{\psi_k}, k = 0,\ldots,d,$
    \end{algorithmic}
\end{algorithm}

\begin{proposition}[Regular Simplex --- \textnormal{ETF$(d{+}1,\,d)$}]
\label{prop:simplex}
For every $d \geq 1$, there exists an equiangular tight frame of $N=d+1$ unit vectors in $\mathbb{C}^d$ with coherence $c_{d+1,d} = 1/d$, achieving $\mathcal{Q}(X\to A)_\rho = \log (d)$ and $P_{\rm guess}(\{1/N,\rho^x\}_{x\in\X}) = d/(d+1)$. Algorithm~\ref{alg:simplex} provides this construction.
\end{proposition}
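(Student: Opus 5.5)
We verify directly that the vectors produced by Algorithm~\ref{alg:simplex} form a unit-norm tight frame with constant pairwise overlap $1/d$. The leakage and guessing-probability claims then follow from Lemma~\ref{thm:tight} and Theorem~\ref{thm:disc}. Throughout, write $P:=I_{d+1}-\ket{u}\bra{u}$ for the orthogonal projector onto $u^{\perp}$.

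\emph{Step 1: Gram matrix in $\mathbb{C}^{d+1}$.} Since $\braket{u}{e_k}=1/\sqrt{d+1}$, we have
\begin{align*}
\langle\tilde\psi_k|\tilde\psi_l\rangle=\frac{d+1}{d}\bra{e_k}P\ket{e_l}=\frac{d+1}{d}\Bigl(\delta_{kl}-\frac{1}{d+1}\Bigr).
\end{align*}
This equals $1$ for $k=l$ and $-1/d$ for $k\neq l$.

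\emph{Step 2: transfer to $\mathbb{C}^d$.} Each $\ket{\tilde\psi_k}$ lies in $u^\perp=\operatorname{col}(Q)$, so $QQ^\dagger\ket{\tilde\psi_k}=P\ket{\tilde\psi_k}=\ket{\tilde\psi_k}$. It follows that
\begin{align*}
\langle\psi_k|\psi_l\rangle=\langle\tilde\psi_k|QQ^\dagger|\tilde\psi_l\rangle=\langle\tilde\psi_k|\tilde\psi_l\rangle.
\end{align*}
Hence the $\ket{\psi_k}$ are unit vectors with $|\langle\psi_k|\psi_l\rangle|=1/d$ for $k\neq l$.

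\emph{Step 3: tightness.} Because $P\ket{u}=0$ and $\sum_k\ket{e_k}\bra{e_k}=I_{d+1}$, we get
\begin{align*}
\sum_k\ket{\tilde\psi_k}\bra{\tilde\psi_k}=\frac{d+1}{d}P\Bigl(\sum_k\ket{e_k}\bra{e_k}\Bigr)P=\frac{d+1}{d}P.
\end{align*}
Conjugating by $Q^\dagger$ and using $Q^\dagger P Q=Q^\dagger Q=I_d$ gives $S=\sum_k\ket{\psi_k}\bra{\psi_k}=\frac{d+1}{d}I_d=(N/d)I_d$. So the family is a unit-norm tight frame.

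\emph{Step 4: coherence and conclusions.} With $N=d+1$, the Welch value is $c_{N,d}^2=\frac{N-d}{d(N-1)}=\frac{1}{d^2}$, so the common overlap $1/d$ from Step~2 saturates \eqref{eqn:welch}. By Definition~\ref{def:ETF} the family is therefore an ETF$(d+1,d)$. For $d\geq1$ we have $N>d$, so Lemma~\ref{thm:tight} applies and gives $P_{\rm guess}=d/N=d/(d+1)$ and $\mathcal{Q}(X\to A)_\rho=\log(d)$.

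\emph{Remaining point: existence of $Q$.} An isometry with the required column space exists because $u^\perp$ has dimension $d$: take any orthonormal basis of $u^\perp$ (e.g.\ via Gram--Schmidt) as the columns of $Q$. For $d=1$ this reduces to two antipodal points, and the overlap $1=c_{2,1}$ is consistent. All steps are routine computations; the only point requiring care is Step~2, where the identity $QQ^\dagger\ket{\tilde\psi_k}=\ket{\tilde\psi_k}$ is what guarantees that inner products are preserved when passing from $\mathbb{C}^{d+1}$ to $\mathbb{C}^d$.
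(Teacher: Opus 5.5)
Your proof is correct and follows the same route as the paper. You compute the inner products of the lifted vectors $\ket{\tilde\psi_k}$ in $\mathbb{C}^{d+1}$, transfer them to $\mathbb{C}^d$ using $QQ^\dagger\ket{\tilde\psi_k}=\ket{\tilde\psi_k}$, show the lifted frame operator equals $\frac{d+1}{d}(I_{d+1}-\ket{u}\bra{u})$ and conjugate by $Q^\dagger$, and then invoke Lemma~\ref{thm:tight}. Your departures are cosmetic: you treat the diagonal and off-diagonal Gram entries in one computation using $P^2=P$, and you explicitly justify the existence of $Q$ and check the $d=1$ case. Both make the argument slightly tidier.
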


\begin{proof}
We verify the three defining properties of unit norm, equiangularity at
$c_{d+1,d}=1/d$, and the tight frame condition $\sum_k|\psi_k\rangle\langle\psi_k|=(N/d)I_d$.

\medskip
\noindent\textbf{Step 1 (Unit norm).}
Since $|\tilde\psi_k\rangle\in u^{\perp} = \operatorname{col}(Q)$,
the map $Q^{\dagger}$ acts as an isometry on $u^{\perp}$, so
$\bra{\psi_k}\ket{\psi_k} = \langle Q^{\dagger}\tilde\psi_k | Q^{\dagger}\tilde\psi_k\rangle = \langle\tilde\psi_k|\tilde\psi_k\rangle$. Note that
\begin{align*}
    \ket{u}\bra{u} \ket{e_k}
    =\ket{u} \left(\frac{1}{\sqrt{d+1}}\sum_{\ell=0}^{d}\bra{e_\ell}\ket{e_k}\right)
    =\frac{1}{\sqrt{d+1}}\ket{u},
\end{align*}
and, as a result,
\begin{align*}
    \langle\tilde\psi_k |\tilde\psi_k\rangle
    &= \frac{d+1}{d}\,
       \bra{e_k}\,\bigl(I_{d+1}-|u\rangle\langle u|\bigr)^2\,\ket{e_k}\\
    &= \frac{d+1}{d}\left(\bra{e_k}
       -\frac{1}{\sqrt{d+1}}\bra{u}
       \right)\left(\ket{e_k}
       -\frac{1}{\sqrt{d+1}}\ket{u}
       \right)\\
   &=\frac{d+1}{d} \left( 1+\frac{1}{d+1}\right)
   -\frac{\sqrt{d+1}}{d}\left(
   \bra{e_k}\ket{u}
   +
   \bra{u}\ket{e_k}
   \right)\\
   &=\frac{d+2}{d}-\frac{2}{d}\\
   &=1.
\end{align*}

\medskip
\noindent\textbf{Step 2 (Equiangularity).}
For any $k,k'$, since both $|\tilde\psi_k\rangle$ and $|\tilde\psi_{k'}\rangle$
lie in $\operatorname{col}(Q)$. Hence, the operator $QQ^{\dagger} = I_{d+1}-|u\rangle\langle u|$ acts as the identity on them, giving $\langle\psi_k| \psi_{k'}\rangle = \langle Q^{\dagger}\tilde\psi_k\, |Q^{\dagger}\tilde\psi_{k'}\rangle= \langle\tilde\psi_k| \tilde\psi_{k'}\rangle.$ For $k\neq k'$, 
\begin{align*}
    \langle\tilde\psi_k| \tilde\psi_{k'}\rangle
    &= \frac{d+1}{d}\left(\bra{e_k}
       -\frac{1}{\sqrt{d+1}}\bra{u}
       \right)\left(\ket{e_{k'}}
       -\frac{1}{\sqrt{d+1}}\ket{u}
       \right)\\
    &=\frac{1}{d}-\frac{\sqrt{d+1}}{d}\left(
   \bra{e_k}\ket{u}
   +
   \bra{u}\ket{e_{k'}}
   \right)\\
   &=-\frac{1}{d}.
\end{align*}
This means $c_{d+1,d}=1/d$, confirming equiangularity at the Welch bound.

\medskip
\noindent\textbf{Step 3 (Tight frame).}
We have
\begin{align}
    \sum_{k=0}^{d} |\tilde\psi_k\rangle \langle\tilde\psi_k|
    &= \frac{d+1}{d}\,
       (I_{d+1}-\ket{u}\bra{u})
       \!\left(\sum_{k=0}^{d} \ket{e_k}\bra{e_k}\right)\!
       (I_{d+1}-\ket{u}\bra{u})
       \notag\\
    &= \frac{d+1}{d}\,
       (I_{d+1}-\ket{u}\bra{u})
       \,I_{d+1}\,
       (I_{d+1}-\ket{u}\bra{u})
       \notag\\
    &= \frac{d+1}{d}\,(I_{d+1}-\ket{u}\bra{u})\\
    &= \frac{d+1}{d}\,QQ^{\dagger},
     \label{eqn:simplex_frame_op_ambient}
\end{align}
where in the last step we used that the projector $(I_{d+1}-\ket{u}\bra{u})$ squares to itself and
$QQ^{\dagger} = I_{d+1}-\ket{u}\bra{u}$.
Applying $Q^{\dagger}$ from the left and $Q$ from the right:
\begin{equation}
    \sum_{k=0}^{d} \ket{\psi_k} \bra{\psi_k}
    = Q^{\dagger}\!\left(\sum_{k=0}^{d}|\tilde\psi_k\rangle\langle\tilde\psi_k|\right)\!Q
    = \frac{d+1}{d}\,Q^{\dagger}QQ^{\dagger}Q
    = \frac{d+1}{d}\,I_d
    = \frac{N}{d}\,I_d,
    \label{eqn:simplex_tight}
\end{equation}
confirming the tight frame condition.

\medskip
Together, Steps 1--3 establish that $\{|\psi_k\rangle\}_{k=0}^{d}$ is an
ETF$(d+1,d)$. Lemma~\ref{thm:tight} then shows that the optimal POVM is the self-referential one, $M_k^* = (d/N) \ket{\psi_k}\bra{\psi_k}$, achieving
$P_{\rm guess}(\{1/N,\rho^x\}_{x\in\X}) = d/(d+1)$ and $\mathcal{Q}(X\to A)_\rho = \log (d)$.
\end{proof}

\begin{figure}
    \centering
    \tdplotsetmaincoords{72}{-20}
    \begin{tikzpicture}[
      tdplot_main_coords,
      scale=2.6,
      rotate=15, 
      >=Stealth,
      font=\small,
      line cap=round,
      line join=round,
    ]
      \pgfmathsetmacro{\sq}{0.8660254}   
     
      \colorlet{csphere}{blue!6!white}
      \colorlet{ccirc}  {blue!55!black}
      \colorlet{ceq}    {gray!50}
      \colorlet{cax}    {gray!60}
      \colorlet{cpsi0}  {red!72!black} 
      \colorlet{cpsi1}  {violet!75!black}
      \colorlet{cpsi2}  {orange!85!black}
     
      \draw[gray!30, fill=csphere] (0,0,0) circle (1);
     
      \tdplotdrawarc[ceq, dashed, thin]{(0,0,0)}{1}{190}{370}{}{}
     
      \draw[ccirc!35, dashed, thin]
        (0.95782,0,0.28736) -- (0.97176,0,0.23598) -- (0.98294,0,0.18392) -- (0.99134,0,0.13135) -- (0.99692,0,0.07840) -- (0.99968,0,0.02523) -- (0.99961,0,-0.02801) -- (0.99670,0,-0.08117) -- (0.99097,0,-0.13410) -- (0.98243,0,-0.18665) -- (0.97110,0,-0.23868) -- (0.95702,0,-0.29002) -- (0.94023,0,-0.34055) -- (0.92077,0,-0.39010) -- (0.89870,0,-0.43856) -- (0.87409,0,-0.48577) -- (0.84700,0,-0.53160) -- (0.81750,0,-0.57592) -- (0.78569,0,-0.61862) -- (0.75165,0,-0.65956) -- (0.71549,0,-0.69863) -- (0.67729,0,-0.73572) -- (0.63717,0,-0.77072) -- (0.59525,0,-0.80354) -- (0.55164,0,-0.83408) -- (0.50647,0,-0.86226) -- (0.45986,0,-0.88799) -- (0.41195,0,-0.91121) -- (0.36287,0,-0.93184) -- (0.31276,0,-0.94983) -- (0.26176,0,-0.96513) -- (0.21002,0,-0.97770) -- (0.15769,0,-0.98749) -- (0.10491,0,-0.99448) -- (0.05183,0,-0.99866) -- (-0.00139,0,-1.00000) -- (-0.05461,0,-0.99851) -- (-0.10768,0,-0.99419) -- (-0.16044,0,-0.98705) -- (-0.21274,0,-0.97711) -- (-0.26444,0,-0.96440) -- (-0.31540,0,-0.94896) -- (-0.36546,0,-0.93083) -- (-0.41448,0,-0.91006) -- (-0.46233,0,-0.88671) -- (-0.50886,0,-0.86085) -- (-0.55396,0,-0.83254) -- (-0.59748,0,-0.80188) -- (-0.63931,0,-0.76895) -- (-0.67933,0,-0.73383) -- (-0.71743,0,-0.69663) -- (-0.75349,0,-0.65746) -- (-0.78741,0,-0.61643) -- (-0.81910,0,-0.57365) -- (-0.84847,0,-0.52924) -- (-0.87544,0,-0.48333) -- (-0.89992,0,-0.43606) -- (-0.92185,0,-0.38754) -- (-0.94117,0,-0.33793) -- (-0.95782,0,-0.28736);
     
      \fill[orange!12, opacity=0.70]
        ({\sq},0,0.5) -- ({-\sq},0,0.5) -- (0,0,-1) -- cycle;
      \draw[gray!65, thick, dashed] ({\sq},0,0.5) -- ({-\sq},0,0.5);
      \draw[gray!55, thick, dashed] ({-\sq},0,0.5) -- (0,0,-1);
      \draw[gray!55, thick, dashed] (0,0,-1) -- ({\sq},0,0.5);
     
      \tdplotdrawarc[ceq, thin]{(0,0,0)}{1}{10}{190}{}{}
     
      \draw[ccirc, thick]
        (-0.95782,0,-0.28736) -- (-0.97176,0,-0.23598) -- (-0.98294,0,-0.18392) -- (-0.99134,0,-0.13135) -- (-0.99692,0,-0.07840) -- (-0.99968,0,-0.02523) -- (-0.99961,0,0.02801) -- (-0.99670,0,0.08117) -- (-0.99097,0,0.13410) -- (-0.98243,0,0.18665) -- (-0.97110,0,0.23868) -- (-0.95702,0,0.29002) -- (-0.94023,0,0.34055) -- (-0.92077,0,0.39010) -- (-0.89870,0,0.43856) -- (-0.87409,0,0.48577) -- (-0.84700,0,0.53160) -- (-0.81750,0,0.57592) -- (-0.78569,0,0.61862) -- (-0.75165,0,0.65956) -- (-0.71549,0,0.69863) -- (-0.67729,0,0.73572) -- (-0.63717,0,0.77072) -- (-0.59525,0,0.80354) -- (-0.55164,0,0.83408) -- (-0.50647,0,0.86226) -- (-0.45986,0,0.88799) -- (-0.41195,0,0.91121) -- (-0.36287,0,0.93184) -- (-0.31276,0,0.94983) -- (-0.26176,0,0.96513) -- (-0.21002,0,0.97770) -- (-0.15769,0,0.98749) -- (-0.10491,0,0.99448) -- (-0.05183,0,0.99866) -- (0.00139,0,1.00000) -- (0.05461,0,0.99851) -- (0.10768,0,0.99419) -- (0.16044,0,0.98705) -- (0.21274,0,0.97711) -- (0.26444,0,0.96440) -- (0.31540,0,0.94896) -- (0.36546,0,0.93083) -- (0.41448,0,0.91006) -- (0.46233,0,0.88671) -- (0.50886,0,0.86085) -- (0.55396,0,0.83254) -- (0.59748,0,0.80188) -- (0.63931,0,0.76895) -- (0.67933,0,0.73383) -- (0.71743,0,0.69663) -- (0.75349,0,0.65746) -- (0.78741,0,0.61643) -- (0.81910,0,0.57365) -- (0.84847,0,0.52924) -- (0.87544,0,0.48333) -- (0.89992,0,0.43606) -- (0.92185,0,0.38754) -- (0.94117,0,0.33793) -- (0.95782,0,0.28736);
     
      \node[ccirc, font=\scriptsize\itshape, anchor=south,rotate=25]
        at (0, 0.03, 0.9836)
        {Bloch circle ($xz$-plane)};
     
      \draw[cax, ->] (0,0,0) -- (1.40,0,0) node[right, black!65]{$x$};
      \draw[cax, ->] (0,0,0) -- (0,1.28,0) node[above right, black!65]{$y$};
      \draw[cax, ->] (0,0,0) -- (0,0,1.52) node[above, black!65]{$z$};
     
      \draw[gray!60, ->, thin]
        (0.27352,0,0.11051) -- (0.28281,0,0.08391) -- (0.28953,0,0.05655) -- (0.29360,0,0.02868) -- (0.29500,0,0.00054) -- (0.29371,0,-0.02760) -- (0.28973,0,-0.05549) -- (0.28312,0,-0.08287) -- (0.27392,0,-0.10950) -- (0.26223,0,-0.13513) -- (0.24814,0,-0.15953) -- (0.23179,0,-0.18247) -- (0.21333,0,-0.20375) -- (0.19292,0,-0.22317) -- (0.17076,0,-0.24056) -- (0.14703,0,-0.25575) -- (0.12196,0,-0.26861) -- (0.09579,0,-0.27902) -- (0.06873,0,-0.28688) -- (0.04106,0,-0.29213);
      \node[gray!65, font=\tiny,rotate=60] at (0.4215/2,0,-0.2434/2){$120^\circ$};
     
      \draw[gray!60, ->, thin]
        (-0.04106,0,-0.29213) -- (-0.06873,0,-0.28688) -- (-0.09579,0,-0.27902) -- (-0.12196,0,-0.26861) -- (-0.14703,0,-0.25575) -- (-0.17076,0,-0.24056) -- (-0.19292,0,-0.22317) -- (-0.21333,0,-0.20375) -- (-0.23179,0,-0.18247) -- (-0.24814,0,-0.15953) -- (-0.26223,0,-0.13513) -- (-0.27392,0,-0.10950) -- (-0.28312,0,-0.08287) -- (-0.28973,0,-0.05549) -- (-0.29371,0,-0.02760) -- (-0.29500,0,0.00054) -- (-0.29360,0,0.02868) -- (-0.28953,0,0.05655) -- (-0.28281,0,0.08391) -- (-0.27352,0,0.11051);
      \node[gray!65, font=\tiny,rotate=-60] at (-0.4215/2,0,-0.2434/2){$120^\circ$};
     
      \draw[gray!60, ->, thin]
        (-0.23246,0,0.18162) -- (-0.21408,0,0.20297) -- (-0.19374,0,0.22246) -- (-0.17164,0,0.23993) -- (-0.14797,0,0.25521) -- (-0.12295,0,0.26816) -- (-0.09681,0,0.27866) -- (-0.06979,0,0.28663) -- (-0.04213,0,0.29198) -- (-0.01409,0,0.29466) -- (0.01409,0,0.29466) -- (0.04213,0,0.29198) -- (0.06979,0,0.28663) -- (0.09681,0,0.27866) -- (0.12295,0,0.26816) -- (0.14797,0,0.25521) -- (0.17164,0,0.23993) -- (0.19374,0,0.22246) -- (0.21408,0,0.20297) -- (0.23246,0,0.18162);
      \node[gray!65, font=\tiny, xshift=2pt,rotate=15] at (0.0000,0,0.4867/2){$120^\circ$};
     
      \draw[->, cpsi0, line width=1.5pt] (0,0,0) -- ({\sq},0,0.5);
      \fill[cpsi0] ({\sq},0,0.5) circle (1.6pt);
      \node[cpsi0, anchor=west, font=\small, xshift=3pt]
        at ({\sq},0,0.53){$\ket{\psi_0}$};
     
      \draw[->, cpsi1, line width=1.5pt] (0,0,0) -- ({-\sq},0,0.5);
      \fill[cpsi1] ({-\sq},0,0.5) circle (1.6pt);
      \node[cpsi1, anchor=east, font=\small, xshift=-3pt]
        at ({-\sq},0,0.53){$\ket{\psi_1}$};
     
      \draw[->, cpsi2, line width=1.5pt] (0,0,0) -- (0,0,-1);
      \fill[cpsi2] (0,0,-1) circle (1.6pt);
      \node[cpsi2, anchor=north west, font=\small, xshift=3pt, yshift=-2pt]
        at (0,0,-1.07){$\ket{\psi_2}$};
    \end{tikzpicture}
    \caption{Qubit trine in Remark~\ref{remark:trine} as regular simplex for $d=2$.}
    \label{fig:trine}
\end{figure}

\begin{remark}[Qubit Trine as Regular Simplex for $d=2$]
\label{remark:trine}
We verify the general construction of Proposition~\ref{prop:simplex}
step by step for $d=2$, $N=3$. The centroid vector is
\begin{equation}
    |u\rangle = \tfrac{1}{\sqrt{3}}\bigl(|e_0\rangle+|e_1\rangle+|e_2\rangle\bigr)
    = \tfrac{1}{\sqrt{3}}
    \begin{bmatrix}
        1 & 1 & 1
    \end{bmatrix}
    ^\top ,
    \label{eqn:trine_u}
\end{equation}
which results in 
\begin{align}
    |\tilde{\psi}_0\rangle &= \sqrt{\tfrac{1}{6}}\begin{bmatrix}
        2 & -1 & -1
    \end{bmatrix}^\top ,
    \label{eqn:trine_psi0}\\
    |\tilde{\psi}_1\rangle &= \sqrt{\tfrac{1}{6}}\begin{bmatrix}
        -1 & 2 & -1
    \end{bmatrix}^\top ,
    \label{eqn:trine_psi1}\\
    |\tilde{\psi}_2\rangle &= \sqrt{\tfrac{1}{6}}\begin{bmatrix}
        -1 & -1 & 2
    \end{bmatrix}^\top .
    \label{eqn:trine_psi2}
\end{align}
We need $Q\in\mathbb{R}^{3\times 2}$ satisfying $Q^\dagger Q=I_2$ and
$\operatorname{col}(Q)=u^{\perp}$.
A natural orthonormal basis for $u^{\perp}$ is obtained by
Gram--Schmidt orthogonalization of the two difference vectors
$\ket{e_0}-\ket{e_1}$ and $\ket{e_1}-\ket{e_2}$:
\begin{align}
    q_1 &
         = \sqrt{\tfrac{1}{2}}\,
         \begin{bmatrix}
             1 & -1 & 0
         \end{bmatrix}^\top 
         \label{eqn:q1}\\
    q_2 &= \sqrt{\tfrac{1}{6}}
    \,
         \begin{bmatrix}
             1 & 1 & -2
         \end{bmatrix}^\top .         \label{eqn:q2}
\end{align}
The isometry is therefore $Q = \bigl[q_1\;\big|\;q_2\bigr]$.
One can verify directly that $Q^\dagger Q=I_2$ (columns are orthonormal)
and $QQ^\dagger = I_3 - \ket{u}\bra{u}$ (the orthogonal projector onto $u^{\perp}$). This results in 
\begin{align}
    \ket{\psi_0}
    &= Q^\dagger|\tilde{\psi}_0\rangle
     = \tfrac{1}{2}\begin{bmatrix}
         \sqrt{3} & 1
     \end{bmatrix}^\top 
     = \cos(\tfrac{\pi}{6})\ket{0} + \sin(\tfrac{\pi}{6})\ket{1},
     \label{eqn:trine_Cd_0}\\
    \ket{\psi_1}
    &= Q^\dagger|\tilde{\psi}_1\rangle
     = \tfrac{1}{2}\begin{bmatrix}
         -\sqrt{3} & 1
     \end{bmatrix}^\top 
     = \cos(\tfrac{5\pi}{6})\ket{0} + \sin(\tfrac{5\pi}{6})\ket{1},
     \label{eqn:trine_Cd_1}\\
    \ket{\psi_2}
    &= Q^\dagger|\tilde{\psi}_2\rangle
     = \begin{bmatrix}
        0 & -1
     \end{bmatrix}^\top 
     = \cos(\tfrac{3\pi}{2})\ket{0} + \sin(\tfrac{3\pi}{2})\ket{1}.
     \label{eqn:trine_Cd_2}
\end{align}
These three vectors have Bloch-circle angles $30^{\circ}$, $150^{\circ}$, $270^{\circ}$, which forms an equilateral triangle with $120^{\circ}$ separation, confirming the trine geometry.
More compactly, 
\begin{align*}
    \ket{\psi_k} = \cos(\tfrac{2k\pi}{3}+\tfrac{\pi}{6})\ket{0}+ \sin(\tfrac{2k\pi}{3}+\tfrac{\pi}{6})\ket{1}, \forall k=0,1,2.
\end{align*}
These states are depicted in Figure~\ref{fig:trine} in a Bloch sphere. Finally, note that the term \emph{trine} comes from the Latin \emph{trinus} (threefold) and the
astrological trine aspect of $120^{\circ}$. The ensemble was studied in~\cite{peres1991optimal} for optimal quantum state detection, and later in quantum cryptography~\cite{fuchs1997optimal}.
It is optimal for both statistical inference (this paper) and quantum state tomography (where it achieves the minimum number of states needed to uniquely identify a qubit density matrix).
\end{remark}

\subsection{SIC-POVMs via the Heisenberg--Weyl group (\texorpdfstring{$N=d^2$}{N=d^2})}
\label{subsec:sic}

SIC-POVMs are ETFs$(d^2,d)$ with coherence $1/\sqrt{d+1}$.
A constructive approach in arbitrary dimension uses the discrete Heisenberg--Weyl group.

\begin{definition}[Heisenberg--Weyl SIC-POVM {\cite{renes2004symmetric,zauner2011quantum}}]
\label{def:sic}
Let $\omega=e^{2\pi i/d}$ and $\tau=e^{i\pi(d+1)/d}$.
Define the generalized Pauli (clock and shift) operators on $\mathbb{C}^d$ by
\begin{align}
    Z|j\rangle = \omega^j|j\rangle,\quad
    X|j\rangle = |j\oplus 1\rangle,\quad j\in\mathbb{Z}_d:=\{0,\dots,d-1\},
\end{align}
where $\oplus$ denotes addition modulo $d$.
The displacement operators are $D_{p,q}:=\tau^{pq}X^pZ^q$ for $p,q\in\mathbb{Z}_d$.
A unit vector $|f\rangle\in\mathbb{C}^d$ is a \emph{fiducial state} for a SIC-POVM
if
\begin{align}\label{eqn:fiducial}
    |\langle f|D_{p,q}|f\rangle|^2 = \frac{1}{d+1}\quad\forall\;(p,q)\neq(0,0).
\end{align}
When a fiducial state $|f\rangle$ exists, the $d^2$ states
$|\psi_{p,q}\rangle=D_{p,q}|f\rangle$
constitute a SIC-POVM.
\end{definition}

\begin{proposition}[HW SIC-POVMs are ETFs]
\label{prop:hw_sic_etf}
Suppose a fiducial state $|f\rangle\in\mathbb{C}^d$ satisfying~\eqref{eqn:fiducial} exists.
Then the $N=d^2$ states
$|\psi_{p,q}\rangle = D_{p,q}|f\rangle$, $(p,q)\in\mathbb{Z}_d\times\mathbb{Z}_d$,
form an \emph{ETF$(d^2,d)$}  with coherence $c_{d^2,d} = 1/{\sqrt{d+1}}$ achieving $P_{\rm guess}(\{1/N,\rho^x\}_{x\in\X})=1/d$ and
$\mathcal{Q}(X\!\to\!A)_\rho=\log (d)$.
\end{proposition}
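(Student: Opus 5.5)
The plan is to verify the three ETF properties for $\{D_{p,q}\ket{f}\}$: unit norm, equiangularity at $c_{d^2,d}=1/\sqrt{d+1}$, and tightness $\sum_{p,q}\ket{\psi_{p,q}}\bra{\psi_{p,q}}=d\,I_d$. Lemma~\ref{thm:tight} then gives $P_{\rm guess}=d/N=1/d$ and $\mathcal{Q}(X\to A)_\rho=\log(d)$. For unit norm, note that $X$ is a permutation matrix and $Z$ is diagonal with unimodular entries, so both are unitary. Since $|\tau|=1$, every $D_{p,q}$ is unitary, and therefore $\bra{\psi_{p,q}}\ket{\psi_{p,q}}=\bra{f}\ket{f}=1$.

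For equiangularity, I will first record the commutation relation $ZX=\omega XZ$. This follows directly from $ZX\ket{j}=\omega^{j+1}\ket{j\oplus1}$ and $XZ\ket{j}=\omega^{j}\ket{j\oplus1}$. Iterating gives $Z^qX^p=\omega^{pq}X^pZ^q$. From this I will derive that the displacement operators form a projective group: $D_{p,q}^\dagger D_{p',q'}=\lambda\,D_{p'-p,\,q'-q}$ for a unimodular scalar $\lambda$, with indices taken mod $d$. The point to handle carefully is reduction mod $d$. When $d$ is even, $\tau^{pq}$ depends on $p,q$ beyond their residues, since $\tau^{d}=e^{i\pi(d+1)}=\pm1$. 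However, this only changes $D_{p,q}$ by a sign, and $X^d=Z^d=I$, so $D_{p+d,q}$ and $D_{p,q}$ differ by a phase. Only moduli enter the argument, so tracking the phase exactly is unnecessary. For $(p,q)\neq(p',q')$ we then get
\[
|\bra{\psi_{p,q}}\ket{\psi_{p',q'}}|^2=|\bra{f}D_{p'-p,q'-q}\ket{f}|^2=\frac{1}{d+1}
\]
by the fiducial condition~\eqref{eqn:fiducial}. This equals $c_{d^2,d}^2=(d^2-d)/(d(d^2-1))=1/(d+1)$, as required.

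For tightness, the cleanest route avoids computing the frame operator entrywise. Let $S=\sum_{p,q}D_{p,q}\ket{f}\bra{f}D_{p,q}^\dagger$. By the projective group law, $D_{a,b}SD_{a,b}^\dagger=S$ for all $(a,b)$, because conjugation by $D_{a,b}$ only permutes the terms up to phases that cancel in $\ket{\cdot}\bra{\cdot}$. In particular $S$ commutes with $Z$, so $S$ is diagonal in the computational basis, since $Z$ has distinct eigenvalues $\omega^j$. $S$ also commutes with $X$, which cyclically permutes the basis, so all diagonal entries are equal. Hence $S=\alpha I$, and taking traces gives $\alpha d=d^2$, so $S=dI_d=(N/d)I_d$.

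As a cross-check I can use the Welch computation from the proof of Theorem~\ref{thm:welch}. We have $\sum_{x\neq x'}|\bra{\psi_x}\ket{\psi_{x'}}|^2=d^2(d^2-1)/(d+1)=d^2(d-1)$. Adding the diagonal gives $\trace(S^2)=d^3=N^2/d$, which is consistent with tightness but does not by itself imply it, so the symmetry argument carries the proof.

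I expect the main obstacle to be the phase bookkeeping in the group law. This covers the $\tau^{pq}$ factors and the mod-$d$ reduction for even $d$, and it is needed to show that $D_{p,q}^\dagger D_{p',q'}$ is a phase times a \emph{non-identity} displacement whenever $(p,q)\neq(p',q')$. I will handle it by working only up to unimodular scalars throughout. Non-identity is then automatic: $X^aZ^b$ with $(a,b)\not\equiv(0,0)$ is not proportional to $I$, because either it moves basis vectors ($a\neq0$) or it has non-constant diagonal ($b\neq0$).
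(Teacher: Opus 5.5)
Your proof is correct and follows the same structure as the paper's: unit norm from unitarity of $D_{p,q}$, equiangularity from the Weyl relation plus the fiducial condition, and tightness from the covariance $D_{a,b}SD_{a,b}^\dagger=S$ followed by trace normalization.

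The one real difference is how you pass from covariance to $S\propto I$. The paper cites irreducibility of the Heisenberg--Weyl representation and applies Schur's lemma. You instead use only two invariances. Commuting with $D_{0,1}=Z$, which has nondegenerate spectrum, makes $S$ diagonal. Commuting with $D_{1,0}=X$, the cyclic shift, makes the diagonal constant. Your version is self-contained and in effect proves the irreducibility the paper imports. The paper's version is shorter and applies verbatim to any covariant construction built from an irreducible group representation.

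Your explicit handling of $\tau^{d}=e^{i\pi(d+1)}=-1$ for even $d$ is also more careful than the paper. The paper writes $D_{p+r,q+s}$ with indices reduced mod $d$ without comment. This is harmless there, since only rank-one projectors and moduli of overlaps enter.

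One claim in your cross-check is wrong: $\operatorname{tr}(S^2)=N^2/d$ together with $\operatorname{tr}(S)=N$ \emph{does} imply tightness. $S$ is Hermitian with eigenvalues $\lambda_1,\dots,\lambda_d\geq 0$ summing to $N$. Cauchy--Schwarz gives $N^2=(\sum_i\lambda_i)^2\leq d\sum_i\lambda_i^2=d\,\operatorname{tr}(S^2)$, with equality iff every $\lambda_i=N/d$, i.e.\ $S=(N/d)I_d$. So once equiangularity at squared overlap $1/(d+1)$ is established, tightness follows with no covariance argument at all. In fact any $d^2$ unit vectors in $\mathbb{C}^d$ with all pairwise squared overlaps equal to $1/(d+1)$ form a tight frame. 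This is not a gap, because your symmetry argument is complete, but it is a shorter route that you wrongly dismissed.
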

 
\begin{proof}
We verify the three properties of unit norm, tight frame condition, and equiangularity.
 
\medskip
\noindent\textbf{(i) Unit norm.}
Since every displacement operator $D_{p,q}=\tau^{pq}X^p Z^q$ is unitary (both $X$ and $Z$ are unitary, and the phase $\tau^{pq}$ has modulus one), hence $\bra{\psi_{p,q}}\ket{\psi_{p,q}}=\bra{f}D_{p,q}^\dagger D_{p,q}\ket{f}=\bra{f}\ket{f}=1$.
 
\medskip
\noindent\textbf{(ii) Tight frame: $\sum_{p,q}|\psi_{p,q}\rangle\langle\psi_{p,q}| = d\,I_d$.}
Define the frame operator
\begin{equation}\label{eqn:sic_frame_op}
    S := \sum_{p,q\in\mathbb{Z}_d}
    D_{p,q}|f\rangle\langle f|D_{p,q}^\dagger.
\end{equation}
\emph{Step 1 (covariance).}
The displacement operators satisfy
\begin{equation}\label{eqn:weyl}
    D_{r,s}\,D_{p,q}
    \;=\; \tau^{sp-rq}\,D_{r+p,\,s+q},
\end{equation}
a consequence of $ZX=\omega XZ$ and $\omega=\tau^2$
\cite{renes2004symmetric}. Noting that $|\tau^{sp-rq}|=1$, we get
\begin{align}
    D_{r,s}\,S\,D_{r,s}^\dagger
    &= \sum_{p,q}
       \bigl(D_{r,s}D_{p,q}\bigr)|f\rangle\langle f|
       \bigl(D_{r,s}D_{p,q}\bigr)^\dagger
       \notag\\
    &= \sum_{p,q}
       D_{p+r,\,q+s}|f\rangle\langle f|D_{p+r,\,q+s}^\dagger\\
    &= S,
\end{align}
where the last equality holds because the map $(p,q)\mapsto(p{+}r,\,q{+}s)$,
with addition taken modulo~$d$, is a bijection on
$\mathbb{Z}_d\times\mathbb{Z}_d$. As $(p,q)$ ranges over all $d^2$ pairs, so does $(p{+}r,\,q{+}s)$, merely in a different order. Renaming the dummy summation variable $(p{+}r,\,q{+}s)\to(p,q)$ therefore
recovers the original sum~$S$. Hence $S$ commutes with every $D_{r,s}$.
 
\emph{Step 2 (Schur's lemma).}
The $d^2$ operators $\{D_{p,q}\}_{p,q\in\mathbb{Z}_d}$ form an irreducible
unitary representation of the discrete Heisenberg--Weyl group on
$\mathbb{C}^d$~\cite[Proposition~1]{renes2004symmetric}.
By Schur's lemma~\cite[\chap\,9]{james2001representations}, any operator commuting with all
elements of an irreducible representation must be proportional to the identity. Therefore, $S = \lambda\,I_d$ for some $\lambda\in\mathbb{C}.$
 
\emph{Step 3 (trace normalisation).}
Taking the trace of~\eqref{eqn:sic_frame_op}:
\begin{equation}
    \operatorname{tr}(S)
    = \sum_{p,q}\operatorname{tr}\!\bigl(D_{p,q}|f\rangle\langle f|D_{p,q}^\dagger\bigr)
    = {\sum_{p,q}\operatorname{tr}(|f\rangle\langle f|)
     =d^2}.
\end{equation}
On the other hand, $\operatorname{tr}(S)=\operatorname{tr}(\lambda I_d)=\lambda d$, we obtain
$\lambda=d$. 
 
\medskip
\noindent\textbf{(iii) Equiangularity at $c_{d^2,d}=1/\!\sqrt{d+1}$.}
For $(p,q)\neq(p',q')$, the Weyl relation~\eqref{eqn:weyl} gives
\begin{equation}
    \langle\psi_{p',q'}|\psi_{p,q}\rangle
    = \langle f|D_{p',q'}^\dagger D_{p,q}|f\rangle
    = e^{i\phi}\,\langle f|D_{p-p',\,q-q'}|f\rangle,
\end{equation}
for some phase $e^{i\phi}$.
Since $(p{-}p',\,q{-}q')\neq(0,0)$, the fiducial
condition~\eqref{eqn:fiducial} of Definition~\ref{def:sic} gives
\begin{equation}
    |\langle\psi_{p',q'}|\psi_{p,q}\rangle|^2
    = |\langle f|D_{p-p',\,q-q'}|f\rangle|^2
    = \frac{1}{d+1}.
\end{equation}
Taking square roots, $|\langle\psi_{p',q'}|\psi_{p,q}\rangle|=1/\!\sqrt{d+1}$
for all $(p,q)\neq(p',q')$.
 
\medskip
Properties~(i)--(iii) together establish that
$\{|\psi_{p,q}\rangle\}$ is an ETF$(d^2,d)$. The rest follows from the application of Lemma~\ref{thm:tight}.
\end{proof}
 
\begin{remark}[Existence of Fiducial State]
\label{rem:sic_existence}
Proposition~\ref{prop:hw_sic_etf} proves that \emph{if} a fiducial state
exists then the resulting states form an ETF$(d^2,d)$.
It does not establish the existence of fiducial states, which is the content of the
Zauner conjecture~\cite{zauner2011quantum} asserting that fiducial states exist in every dimension $d\geq 1$. Fiducial states have been found for all $d\leq 53$ and many larger values through a combination of analytic and numerical methods~\cite{scott2010symmetric, appleby2017sic}.
\end{remark}

\begin{figure}
    \centering
    \tdplotsetmaincoords{72}{-20}
\begin{tikzpicture}[
  tdplot_main_coords,
  scale=2.6,
  rotate=15,
  >=Stealth,
  font=\small,
  line cap=round,
  line join=round,
]
  \pgfmathsetmacro{\sqtc}{0.81650}
  \pgfmathsetmacro{\sqth}{0.40825}
  \pgfmathsetmacro{\sqhf}{0.70711}
  \pgfmathsetmacro{\rring}{0.94281}
 
  \colorlet{csphere}{blue!6!white}
  \colorlet{ccirc}  {blue!55!black}
  \colorlet{ceq}    {gray!50}
  \colorlet{cax}    {gray!60}
  \colorlet{cpsi0}  {red!72!black}
  \colorlet{cpsi1}  {orange!85!black}
  \colorlet{cpsi2}  {violet!75!black}
  \colorlet{cpsi3}  {green!55!black}
 
  \draw[gray!30, fill=csphere] (0,0,0) circle (1);
 
  \draw[ceq, dashed, thin]
    (-0.42262,0.90631,0) -- (-0.47460,0.88020,0) -- (-0.52498,0.85112,0) -- (-0.57358,0.81915,0) -- (-0.62024,0.78442,0) -- (-0.66480,0.74703,0) -- (-0.70711,0.70711,0) -- (-0.74703,0.66480,0) -- (-0.78442,0.62024,0) -- (-0.81915,0.57358,0) -- (-0.85112,0.52498,0) -- (-0.88020,0.47460,0) -- (-0.90631,0.42262,0) -- (-0.92935,0.36921,0) -- (-0.94924,0.31454,0) -- (-0.96593,0.25882,0) -- (-0.97934,0.20222,0) -- (-0.98944,0.14493,0) -- (-0.99619,0.08716,0) -- (-0.99958,0.02908,0) -- (-0.99958,-0.02908,0) -- (-0.99619,-0.08716,0) -- (-0.98944,-0.14493,0) -- (-0.97934,-0.20222,0) -- (-0.96593,-0.25882,0) -- (-0.94924,-0.31454,0) -- (-0.92935,-0.36921,0) -- (-0.90631,-0.42262,0) -- (-0.88020,-0.47460,0) -- (-0.85112,-0.52498,0) -- (-0.81915,-0.57358,0) -- (-0.78442,-0.62024,0) -- (-0.74703,-0.66480,0) -- (-0.70711,-0.70711,0) -- (-0.66480,-0.74703,0) -- (-0.62024,-0.78442,0) -- (-0.57358,-0.81915,0) -- (-0.52498,-0.85112,0) -- (-0.47460,-0.88020,0) -- (-0.42262,-0.90631,0) -- (-0.36921,-0.92935,0) -- (-0.31454,-0.94924,0) -- (-0.25882,-0.96593,0) -- (-0.20222,-0.97934,0) -- (-0.14493,-0.98944,0) -- (-0.08716,-0.99619,0) -- (-0.02908,-0.99958,0) -- (0.02908,-0.99958,0) -- (0.08716,-0.99619,0) -- (0.14493,-0.98944,0) -- (0.20222,-0.97934,0) -- (0.25882,-0.96593,0) -- (0.31454,-0.94924,0) -- (0.36921,-0.92935,0) -- (0.42262,-0.90631,0);
 
  \draw[ccirc!35, dashed, thin]
    (0.45710,0,-0.88942) -- (0.41291,0,-0.91077) -- (0.36772,0,-0.92994) -- (0.32165,0,-0.94686) -- (0.27480,0,-0.96150) -- (0.22729,0,-0.97383) -- (0.17923,0,-0.98381) -- (0.13074,0,-0.99142) -- (0.08194,0,-0.99664) -- (0.03294,0,-0.99946) -- (-0.01614,0,-0.99987) -- (-0.06519,0,-0.99787) -- (-0.11407,0,-0.99347) -- (-0.16268,0,-0.98668) -- (-0.21090,0,-0.97751) -- (-0.25861,0,-0.96598) -- (-0.30570,0,-0.95213) -- (-0.35205,0,-0.93598) -- (-0.39755,0,-0.91758) -- (-0.44209,0,-0.89697) -- (-0.48557,0,-0.87420) -- (-0.52788,0,-0.84932) -- (-0.56892,0,-0.82239) -- (-0.60859,0,-0.79349) -- (-0.64679,0,-0.76267) -- (-0.68343,0,-0.73001) -- (-0.71843,0,-0.69560) -- (-0.75170,0,-0.65951) -- (-0.78315,0,-0.62183) -- (-0.81272,0,-0.58266) -- (-0.84033,0,-0.54207) -- (-0.86592,0,-0.50019) -- (-0.88942,0,-0.45710) -- (-0.91077,0,-0.41291) -- (-0.92994,0,-0.36772) -- (-0.94686,0,-0.32165) -- (-0.96150,0,-0.27480) -- (-0.97383,0,-0.22729) -- (-0.98381,0,-0.17923) -- (-0.99142,0,-0.13074) -- (-0.99664,0,-0.08194) -- (-0.99946,0,-0.03294) -- (-0.99987,0,0.01614) -- (-0.99787,0,0.06519) -- (-0.99347,0,0.11407) -- (-0.98668,0,0.16268) -- (-0.97751,0,0.21090) -- (-0.96598,0,0.25861) -- (-0.95213,0,0.30570) -- (-0.93598,0,0.35205) -- (-0.91758,0,0.39755) -- (-0.89697,0,0.44209) -- (-0.87420,0,0.48557) -- (-0.84932,0,0.52788) -- (-0.82239,0,0.56892) -- (-0.79349,0,0.60859) -- (-0.76267,0,0.64679) -- (-0.73001,0,0.68343) -- (-0.69560,0,0.71843) -- (-0.65951,0,0.75170) -- (-0.62183,0,0.78315) -- (-0.58266,0,0.81272) -- (-0.54207,0,0.84033) -- (-0.50019,0,0.86592) -- (-0.45710,0,0.88942);
 
  \draw[gray!40, dashed, thin]
    (-0.25207,0.90849,-0.33333) -- (-0.30508,0.89209,-0.33333) -- (-0.35703,0.87259,-0.33333) -- (-0.40775,0.85008,-0.33333) -- (-0.45705,0.82462,-0.33333) -- (-0.50477,0.79630,-0.33333) -- (-0.55074,0.76523,-0.33333) -- (-0.59480,0.73150,-0.33333) -- (-0.63681,0.69525,-0.33333) -- (-0.67660,0.65658,-0.33333) -- (-0.71406,0.61564,-0.33333) -- (-0.74904,0.57256,-0.33333) -- (-0.78142,0.52751,-0.33333) -- (-0.81110,0.48062,-0.33333) -- (-0.83797,0.43208,-0.33333) -- (-0.86194,0.38203,-0.33333) -- (-0.88292,0.33066,-0.33333) -- (-0.90085,0.27815,-0.33333) -- (-0.91565,0.22467,-0.33333) -- (-0.92728,0.17041,-0.33333) -- (-0.93570,0.11557,-0.33333) -- (-0.94088,0.06032,-0.33333) -- (-0.94280,0.00487,-0.33333) -- (-0.94145,-0.05060,-0.33333) -- (-0.93684,-0.10590,-0.33333) -- (-0.92899,-0.16083,-0.33333) -- (-0.91792,-0.21520,-0.33333) -- (-0.90367,-0.26883,-0.33333) -- (-0.88629,-0.32153,-0.33333) -- (-0.86584,-0.37311,-0.33333) -- (-0.84239,-0.42340,-0.33333) -- (-0.81602,-0.47222,-0.33333) -- (-0.78683,-0.51941,-0.33333) -- (-0.75491,-0.56480,-0.33333) -- (-0.72038,-0.60823,-0.33333) -- (-0.68335,-0.64956,-0.33333) -- (-0.64395,-0.68863,-0.33333) -- (-0.60233,-0.72532,-0.33333) -- (-0.55861,-0.75950,-0.33333) -- (-0.51296,-0.79105,-0.33333) -- (-0.46554,-0.81985,-0.33333) -- (-0.41650,-0.84582,-0.33333) -- (-0.36602,-0.86886,-0.33333) -- (-0.31427,-0.88889,-0.33333) -- (-0.26144,-0.90584,-0.33333) -- (-0.20770,-0.91965,-0.33333) -- (-0.15323,-0.93027,-0.33333) -- (-0.09824,-0.93768,-0.33333) -- (-0.04291,-0.94183,-0.33333) -- (0.01257,-0.94273,-0.33333) -- (0.06801,-0.94035,-0.33333) -- (0.12321,-0.93472,-0.33333) -- (0.17799,-0.92586,-0.33333) -- (0.23214,-0.91378,-0.33333) -- (0.28550,-0.89854,-0.33333) -- (0.33786,-0.88019,-0.33333) -- (0.38906,-0.85879,-0.33333) -- (0.43891,-0.83442,-0.33333) -- (0.48724,-0.80715,-0.33333) -- (0.53387,-0.77709,-0.33333);
 
  \fill[orange!10, opacity=0.35]
    (0,0,1) -- (0.9428,0,-0.33333) -- (-0.4714,-0.8164,-0.33333) -- cycle;
 
  \fill[violet!8, opacity=0.30]
    (0,0,1) -- (-0.4714,0.8164,-0.33333) -- (-0.4714,-0.8164,-0.33333) -- cycle;
 
  \fill[orange!15, opacity=0.45]
    (0.9428,0,-0.33333) -- (-0.4714,0.8164,-0.33333) --
    (-0.4714,-0.8164,-0.33333) -- cycle;
 
  \fill[orange!18, opacity=0.55]
    (0,0,1) -- (0.9428,0,-0.33333) -- (-0.4714,0.8164,-0.33333) -- cycle;
 
  \draw[gray!50, thick, dashed]
    (-0.4714,0.8164,-0.33333) -- (-0.4714,-0.8164,-0.33333);
  \draw[gray!60, thick, dashed]
    (0,0,1) -- (-0.4714,-0.8164,-0.33333);
  \draw[gray!60, thick, dashed]
    (0.9428,0,-0.33333) -- (-0.4714,-0.8164,-0.33333);
  \draw[gray!65, thick, dashed]
    (0,0,1) -- (-0.4714,0.8164,-0.33333);
  \draw[gray!65, thick, dashed]
    (0.9428,0,-0.33333) -- (-0.4714,0.8164,-0.33333);
  \draw[gray!70, thick, dashed]
    (0,0,1) -- (0.9428,0,-0.33333);
 
  \draw[ceq, thin]
    (0.42262,-0.90631,0) -- (0.47460,-0.88020,0) -- (0.52498,-0.85112,0) -- (0.57358,-0.81915,0) -- (0.62024,-0.78442,0) -- (0.66480,-0.74703,0) -- (0.70711,-0.70711,0) -- (0.74703,-0.66480,0) -- (0.78442,-0.62024,0) -- (0.81915,-0.57358,0) -- (0.85112,-0.52498,0) -- (0.88020,-0.47460,0) -- (0.90631,-0.42262,0) -- (0.92935,-0.36921,0) -- (0.94924,-0.31454,0) -- (0.96593,-0.25882,0) -- (0.97934,-0.20222,0) -- (0.98944,-0.14493,0) -- (0.99619,-0.08716,0) -- (0.99958,-0.02908,0) -- (0.99958,0.02908,0) -- (0.99619,0.08716,0) -- (0.98944,0.14493,0) -- (0.97934,0.20222,0) -- (0.96593,0.25882,0) -- (0.94924,0.31454,0) -- (0.92935,0.36921,0) -- (0.90631,0.42262,0) -- (0.88020,0.47460,0) -- (0.85112,0.52498,0) -- (0.81915,0.57358,0) -- (0.78442,0.62024,0) -- (0.74703,0.66480,0) -- (0.70711,0.70711,0) -- (0.66480,0.74703,0) -- (0.62024,0.78442,0) -- (0.57358,0.81915,0) -- (0.52498,0.85112,0) -- (0.47460,0.88020,0) -- (0.42262,0.90631,0) -- (0.36921,0.92935,0) -- (0.31454,0.94924,0) -- (0.25882,0.96593,0) -- (0.20222,0.97934,0) -- (0.14493,0.98944,0) -- (0.08716,0.99619,0) -- (0.02908,0.99958,0) -- (-0.02908,0.99958,0) -- (-0.08716,0.99619,0) -- (-0.14493,0.98944,0) -- (-0.20222,0.97934,0) -- (-0.25882,0.96593,0) -- (-0.31454,0.94924,0) -- (-0.36921,0.92935,0) -- (-0.42262,0.90631,0);
 
  \draw[ccirc, thick]
    (-0.45710,0,0.88942) -- (-0.41291,0,0.91077) -- (-0.36772,0,0.92994) -- (-0.32165,0,0.94686) -- (-0.27480,0,0.96150) -- (-0.22729,0,0.97383) -- (-0.17923,0,0.98381) -- (-0.13074,0,0.99142) -- (-0.08194,0,0.99664) -- (-0.03294,0,0.99946) -- (0.01614,0,0.99987) -- (0.06519,0,0.99787) -- (0.11407,0,0.99347) -- (0.16268,0,0.98668) -- (0.21090,0,0.97751) -- (0.25861,0,0.96598) -- (0.30570,0,0.95213) -- (0.35205,0,0.93598) -- (0.39755,0,0.91758) -- (0.44209,0,0.89697) -- (0.48557,0,0.87420) -- (0.52788,0,0.84932) -- (0.56892,0,0.82239) -- (0.60859,0,0.79349) -- (0.64679,0,0.76267) -- (0.68343,0,0.73001) -- (0.71843,0,0.69560) -- (0.75170,0,0.65951) -- (0.78315,0,0.62183) -- (0.81272,0,0.58266) -- (0.84033,0,0.54207) -- (0.86592,0,0.50019) -- (0.88942,0,0.45710) -- (0.91077,0,0.41291) -- (0.92994,0,0.36772) -- (0.94686,0,0.32165) -- (0.96150,0,0.27480) -- (0.97383,0,0.22729) -- (0.98381,0,0.17923) -- (0.99142,0,0.13074) -- (0.99664,0,0.08194) -- (0.99946,0,0.03294) -- (0.99987,0,-0.01614) -- (0.99787,0,-0.06519) -- (0.99347,0,-0.11407) -- (0.98668,0,-0.16268) -- (0.97751,0,-0.21090) -- (0.96598,0,-0.25861) -- (0.95213,0,-0.30570) -- (0.93598,0,-0.35205) -- (0.91758,0,-0.39755) -- (0.89697,0,-0.44209) -- (0.87420,0,-0.48557) -- (0.84932,0,-0.52788) -- (0.82239,0,-0.56892) -- (0.79349,0,-0.60859) -- (0.76267,0,-0.64679) -- (0.73001,0,-0.68343) -- (0.69560,0,-0.71843) -- (0.65951,0,-0.75170) -- (0.62183,0,-0.78315) -- (0.58266,0,-0.81272) -- (0.54207,0,-0.84033) -- (0.50019,0,-0.86592) -- (0.45710,0,-0.88942);
 
  \draw[gray!55, thin]
    (0.53387,-0.77709,-0.33333) -- (0.58844,-0.73663,-0.33333) -- (0.63994,-0.69236,-0.33333) -- (0.68813,-0.64449,-0.33333) -- (0.73275,-0.59327,-0.33333) -- (0.77356,-0.53898,-0.33333) -- (0.81035,-0.48189,-0.33333) -- (0.84295,-0.42229,-0.33333) -- (0.87116,-0.36051,-0.33333) -- (0.89486,-0.29685,-0.33333) -- (0.91391,-0.23166,-0.33333) -- (0.92821,-0.16526,-0.33333) -- (0.93770,-0.09800,-0.33333) -- (0.94232,-0.03023,-0.33333) -- (0.94206,0.03769,-0.33333) -- (0.93690,0.10541,-0.33333) -- (0.92688,0.17259,-0.33333) -- (0.91205,0.23888,-0.33333) -- (0.89248,0.30392,-0.33333) -- (0.86828,0.36739,-0.33333) -- (0.83958,0.42895,-0.33333) -- (0.80652,0.48828,-0.33333) -- (0.76927,0.54508,-0.33333) -- (0.72803,0.59905,-0.33333) -- (0.68301,0.64991,-0.33333) -- (0.63445,0.69740,-0.33333) -- (0.58259,0.74127,-0.33333) -- (0.52771,0.78129,-0.33333) -- (0.47009,0.81725,-0.33333) -- (0.41003,0.84898,-0.33333) -- (0.34784,0.87629,-0.33333) -- (0.28385,0.89906,-0.33333) -- (0.21839,0.91717,-0.33333) -- (0.15179,0.93051,-0.33333) -- (0.08440,0.93902,-0.33333) -- (0.01657,0.94266,-0.33333) -- (-0.05134,0.94141,-0.33333) -- (-0.11898,0.93527,-0.33333) -- (-0.18601,0.92428,-0.33333) -- (-0.25207,0.90849,-0.33333);
 
  \node[ccirc, font=\scriptsize\itshape, anchor=south]
    at (0.9397, 0.03, 0.4520)
    {Bloch circle ($xz$-plane)};
 
  \draw[cax, ->] (0,0,0) -- (1.40,0,0) node[right, black!65]{$x$};
  \draw[cax, ->] (0,0,0) -- (0,1.28,0) node[above right, black!65]{$y$};
  \draw[cax, ->] (0,0,0) -- (0,0,1.52) node[above, black!65]{$z$};
 
  \draw[gray!55, ->, thin]
    (0.00000,0.00000,0.28000) -- (0.04959,0.00000,0.27557) -- (0.09761,0.00000,0.26244) -- (0.14254,0.00000,0.24100) -- (0.18297,0.00000,0.21195) -- (0.21761,0.00000,0.17620) -- (0.24538,0.00000,0.13487) -- (0.26538,0.00000,0.08929) -- (0.27700,0.00000,0.04088) -- (0.27986,0.00000,-0.00883) -- (0.27387,0.00000,-0.05825) -- (0.25923,0.00000,-0.10583);
  \node[gray!60, font=\tiny,rotate=-45] at (0.2951,0.0000,0.1655){$\approx 109.5^\circ$};
 
  \draw[->, cpsi0, line width=1.5pt] (0,0,0) -- (0,0,1);
  \fill[cpsi0] (0,0,1) circle (1.6pt);
  \node[cpsi0, anchor=west, font=\small, xshift=3pt] at (0,0,1.15){$\ket{\psi_0}$};
 
  \draw[->, cpsi1, line width=1.5pt] (0,0,0) -- (0.9428,0,-0.33333);
  \fill[cpsi1] (0.9428,0,-0.33333) circle (1.6pt);
  \node[cpsi1, anchor=west, font=\small, xshift=3pt]
    at (0.9428,0,-0.35){$\ket{\psi_1}$};
 
  \draw[->, cpsi2, line width=1.5pt] (0,0,0) -- (-0.4714,0.8164,-0.33333);
  \fill[cpsi2] (-0.4714,0.8164,-0.33333) circle (1.6pt);
  \node[cpsi2, anchor=east, font=\small, xshift=-3pt]
    at (-0.4714,0.8164,-0.33){$\ket{\psi_2}$};

  \draw[->, cpsi3, line width=1.5pt] (0,0,0) -- (-0.4714,-0.8164,-0.33333);
  \fill[cpsi3] (-0.4714,-0.8164,-0.33333) circle (1.6pt);
  \node[cpsi3, anchor=north east, font=\small, xshift=-2pt, yshift=-2pt]
    at (-0.4714,-0.8164,-0.36){$\ket{\psi_3}$};
 \end{tikzpicture}
    \caption{Qubit SIC-POVM in Remark~\ref{rem:sic_d2} for $d=2$.}
    \label{fig:sic_d2}
\end{figure}

\begin{remark}[Qubit SIC-POVM]
\label{rem:sic_d2}
For $d = 2$, the displacement operators are 
$\{D_{0,0}, D_{1,0}, D_{0,1}, D_{1,1}\} = \{I, X, Z, -Y\}$. The fiducial state
\begin{equation}
|f\rangle = \frac{1}{\sqrt{6}}\left[\sqrt{3+\sqrt{3}}\,|0\rangle 
+ e^{i\pi/4}\sqrt{3-\sqrt{3}}\,|1\rangle\right]
\label{eq:qubit-fiducial}
\end{equation}
satisfies the fiducial condition~\eqref{eqn:fiducial}, and the Heisenberg--Weyl 
orbit $\{|f\rangle, X|f\rangle, Z|f\rangle, -Y|f\rangle\}$ forms an 
$\mathrm{ETF}(4, 2)$. There exists a unitary 
$U$ on $\mathbb{C}^2$ that rotates these states so that one sits at 
$\ket{0}$ to get
\begin{equation}
|\psi_0\rangle = |0\rangle, \qquad 
|\psi_k\rangle = \frac{1}{\sqrt{3}}|0\rangle 
+ \sqrt{\tfrac{2}{3}}\,e^{i 2\pi(k-1)/3}|1\rangle, \quad k = 1, 2, 3,
\label{eq:qubit-sic-states}
\end{equation}
which is depicted in Figure~\ref{fig:sic_d2}. Both forms are SIC-POVMs with 
coherence $c_{4,2} = 1/\sqrt{3}$ achieving 
$P_{\mathrm{guess}}(\{1/N, \rho_x\}_{x \in \mathcal{X}}) = 1/2$ and $Q(X \to A)_\rho = \log (2)$.
\end{remark}

The significance of SIC-POVMs as optimal encodings is two-fold. First, they saturate the Welch bound with the highest possible coherence $c_{d^2,d}=1/\sqrt{d+1}$.
Second, the self-referential measurement $M_{p,q}^*=(1/d)\ket{\psi_{p,q}}\bra{\psi_{p,q}}$ is simultaneously optimal for discrimination and has the property that the POVM elements are proportional to the codewords, which is a remarkable self-duality that underpins the
role of SIC-POVMs in quantum tomography~\cite{renes2004symmetric}.

\begin{remark}[Harmonic ETFs From Difference Sets]
\label{rem:harmonic_etf}
Beyond the explicit constructions given above, equiangular tight frames can be built algebraically from combinatorial difference sets~\cite{xia2005achieving,
strohmer2003grassmannian, waldron2018frames}.
Classical families include the Paley construction and
Singer difference sets (arising from projective geometries over finite fields), but the existence of difference sets with prescribed parameters is a deep open problem in combinatorics, and no complete
classification is known~\cite{waldron2018frames}.
Whether the resulting ETFs can be prepared efficiently on a quantum processor is a natural directions for future work.
\end{remark}

{
\begin{remark}[Mutually unbiased bases]
Two sets of orthonormal bases $ \mathcal{B}^{k}=\{|\psi_{i}^{k}\rangle: i=1, \dots, d\} $ and $ \mathcal{B}^{\ell}=\{|\psi_{j}^{\ell}\rangle: j=1, \dots, d\} $ are called mutually unbiased if and only if~\cite{PhysRevLett.105.030406}
\begin{equation}
\label{mub}
|\langle\psi_{i}^{k} | \psi_{j}^{\ell}\rangle|^{2}=\left\{\begin{array}{ll}
1 / d & \text { for } k \neq \ell, \\
\delta_{i, j} & \text { for } k=\ell.
\end{array}\right.
\end{equation}
In particular, one can find a maximum of $d+1$ sets of mutually unbiased bases in Hilbert spaces of prime-power dimension $d=p^{k}$, with $p$ being a prime and $k$ a positive integer~\cite{mubreview}. However, it is still an open problem whether $d+1$ sets of mutually unbiased bases exist for arbitrary dimensions~\cite{PRXQuantum.3.010101}, even for $d=6$. If there are $a$ sets of mutually unbiased bases, for these $N=ad$ states, we can prove that $P_{\mathrm{guess}}(\{1/N, \rho_x\}_{x \in \mathcal{X}}) = 1/a$ and $Q(X \to A)_\rho = \log (d)$. Therefore, these states provide an optimal encoding, despite not forming an ETF across different sets. Furthermore, such states have been identified as optimal for quantum detector tomography~\cite{xiao2021optimal}.
\end{remark}
}

\begin{remark}[Iterative algorithm for optimal encoding]
\label{rem:algorithm}
When no closed-form ETF construction is available for the required parameters $(N,d)$, one can maximize $\mathcal{Q}(X\to A)_\rho$, or equivalent $P_{\rm guess}(\{1/N,\rho^x\}_{x\in\X})$ directly by projected subgradient ascent. The leakage for a fixed encoding is $
2^{\mathcal{Q}(X\!\to\!A)_\rho}=\max_{\{F_y\}_{y\in\Y}}\sum_{y\in\Y}\trace(\rho^{x^*(y)}F_y),$
where $\Y=\{1,\dots,d^2\}$ and $x^*(y)\in\argmax_{x}\trace(\rho^x F_y)$. The subgradient with respect to $\rho^x$ is {$\partial_{\rho^x}2^{\mathcal{Q}}=\sum_{y:x^*(y)=x}F_y^*$}, where $\{F_y^*\}$ is the optimal POVM can be computed by the iterative algorithm of~\cite{Farokhi_PRA}. The projected subgradient ascent step is {$\rho^x \gets \Pi\bigl[\rho^x+\mu\,\partial_{\rho^x}2^{\mathcal{Q}}\bigr]$}, where $\mu>0$ is the step size and $\Pi$ projects to the set of rank-one
density operators. For any Hermitian operator $\sigma=\sum_i\lambda_i|i\rangle\langle i|$,
the projection is $\Pi[\sigma]=|i^*\rangle\langle i^*|$ where $i^*\in\argmax_i|\lambda_i|$. At each step, we can move each codeword in the direction of the optimal POVM element and project back to the pure-state manifold.
This is related to the frame-potential gradient flow of
Benedetto and Fickus~\cite{benedetto2003finite} and the alternating projection method of Tropp et al.~\cite{strohmer2003grassmannian}, which minimize $F(\Psi)=\sum_{ij}|\bra{\psi_i}\ket{\psi_j}|^4$
to find tight frames. The quantum-information framing gives the subgradient a natural interpretation as the optimal discriminating measurement rather than a purely geometric update. Convergence guarantees, extensions to noisy quantum channels, and efficient hardware implementation of the resulting encodings are left as directions for future work.
\end{remark}

\section{Numerical Experiments}
\label{sec:numerical}

\subsection{Investigating Popular Quantum Encoding Policies}

We survey seven encoding strategies that span the space from well-known quantum computing encodings to information-theoretically optimal constructions.
\begin{enumerate}
    \item \textit{Basis encoding}: For $x = 0,\ldots,N-1$, the encoded state is $\rho^x=\ket{x}\bra{x}$, where $\{\ket{0},\ldots,\ket{d-1}\}$ is the standard computational basis of $\H\cong\mathbb{C}^d$. This code is valid only if $N\leq d$. For $N>d$, codewords must cycle through the basis modulo~$d$, which result in severe information loss. By Proposition~\ref{prop:index}, basis encoding is universally optimal in the complete regime of $N\leq d$. Basis encoding is the standard binary representation used in quantum algorithms
    appearing in Grover's algorithm, quantum phase estimation, and the HHL algorithm.
    \item \textit{Phase (DFT) encoding}: The classical value $x = 0,\ldots,N-1$ is encoded as a phase twist $e^{2\pi ixj/N}$ applied to the uniform superposition resulting in state encoding $\rho^x=\ket{\psi_x}\bra{\psi_x}$ with $ \ket{\psi_x}=\frac{1}{\sqrt{d}}\sum_{j=0}^{d-1} e^{2\pi i xj/N}\,\ket{j}$. Each codeword is a column of a generalized DFT matrix of size $d\times N$. The frame operator is $S=\frac{N}{d}\,I_d$ whenever $d\leq N$. Hence phase encoding is a \emph{tight frame for every $N\geq d$}, achieving $\mathcal{Q}(X\to A)_\rho=\log (d)$ in the over-complete regime.
    This means phase encoding is the optimal universal encoding for all $N\geq d$. However, phase encoding is generally \emph{not equiangular}, that is, the pairwise overlaps $|\bra{\psi_x}\ket{\psi_{x'}}| = |d^{-1}\sum_j e^{2\pi i(x-x')j/N}|$ depend on $x-x'$ and grow toward $1$ as $N$ increases. Phase encoding is the basis of the quantum phase estimation circuit and the quantum Fourier transform. The codewords $\ket{\psi_x}$ are eigen-states of the shift operator, making phase encoding the natural representation for periodic signals.
    \item \textit{Amplitude encoding}: Following the standard quantum machine learning convention~\cite{biamonte2017quantum}, amplitude encoding represents a classical vector $\mathbf{v}\in\mathbb{R}^d$ as $\ket{\psi_{\mathbf{v}}} = \|\mathbf{v}\|^{-1}\sum_j v_j\ket{j}$. Applied to a class label $x\in\{0,\ldots,N-1\}$, we extract a binary feature vector with a leading bias bit, $\mathbf{b}(x)
    = (1,\;\mathrm{bit}_{n-1}(x),\;\ldots,\;\mathrm{bit}_0(x)) \in\{0,1\}^{n+1}$, $n = \lceil\log_2 N\rceil$,
    where $\mathrm{bit}_k(x)$ is the $k$-th bit of $x$ (most significant bit first) and the leading $1$ ensures $\mathbf{b}(x)\neq\mathbf{0}$ for all $x$. {The codeword is then $\ket{\psi_x}= \|\mathbf{b}(x)\|^{-1}\sum_j b_{j}(x)\ket{j}$.} The natural dimension is $d = n+1 = \lceil\log_2 N\rceil + 1$, giving an exponential compression, i.e., $N$ labels in $O(\log N)$ dimensions. The frame operator  is generally not proportional to identity and the encoding is not a tight frame. Consequently $\mathcal{Q}(X\to A)_\rho<\log (d)$ in general. However, unlike basis encoding, it uses only $d=O(\log N)$ dimensions, exploiting the exponential compression of amplitude encoding.
    In practice, preparing amplitude-encoded states requires $O(N)$ gates without QRAM~\cite{biamonte2017quantum}, which may offset the dimensional compression for large $N$.
    \item \textit{Equatorial encoding}: For $x = 0,\ldots,N-1$, the encoded state is $\rho^x=\ket{\psi_x}\bra{\psi_x}$, where $\ket{\psi_x}= \cos\!(\frac{\pi x}{N})\ket{0}
    + \sin\!(\frac{\pi x}{N})\ket{1}$. The angle $\pi x/N\in[0,\pi)$ sweeps a \emph{half-circle} on the Bloch great circle, placing codewords at equal angular spacing $\pi/N$. For $d=2$ (qubit), the $N$ codewords span $\mathbb{C}^2$ for all $N\geq 2$, and the frame operator evaluates to $S=(N/2)I_2$, confirming a tight frame for every $N$. This encoding therefore achieves $\mathcal{Q}(X\to A)_\rho=\log (2)$ for all $N\geq 2$, $d=2$. For $d>2$ the codewords all lie in the two-dimensional subspace $\operatorname{span}\{\ket{0},\ket{1}\}$ and do not span {$\H\cong\mathbb{C}^d$}. The qubit trine ($N=3$) is special cases of equatorial encoding.
    \item \textit{Dense angle encoding}: For $x = 0,\ldots,N-1$, the encoded state is $\rho^x=\ket{\psi_x}\bra{\psi_x}$, where $\ket{\psi_x}
    = \cos\!(\frac{2\pi x}{N})\ket{0}
    + \sin\!(\frac{2\pi x}{N})\ket{1}$. The angle $2\pi x/N$ sweeps a \emph{full circle} on the Bloch great circle. The doubled angle relative to equatorial encoding means adjacent codewords are $2\pi/N$ apart (the same angular spacing as DFT columns on the unit circle). For $N=2$ the codewords are $\ket{0}$ and $-\ket{0}$, \emph{identical} density matrices, so $P_{\rm guess}(\{1/N,\rho^x\})=1/2$ (random guessing) and $\mathcal{Q}(X\to A)_\rho=0$. This is the worst possible case. For $N\geq 3$ the codewords span $\mathbb{C}^2$ and the frame operator again gives a tight frame for $d=2$, recovering $\mathcal{Q}(X\to A)_\rho=\log (2)$. As with equatorial encoding, the subspace restriction to $\{\ket{0},\ket{1}\}$ makes performance suboptimal for $d>2$. Dense angle (or ``double angle'') encoding appears in quantum kernel methods~\cite{perez2020data} where the feature map $\phi(x)=\cos(2x)Z+\sin(2x)Y$ is applied to a qubit initialized in $\ket{+}$. The doubled angle is motivated by the desire to use the full Bloch sphere range, but the inference analysis shows this does not improve $\mathcal{Q}(X\to A)_\rho$ over equatorial encoding on a qubit and therefore can suffer for some inference problem.
    \item \textit{Hamiltonian encoding}: Let $H=\operatorname{diag}(h_0,\ldots,h_{d-1})$ with $h_k = 2k/(d-1)-1\in[-1,1]$ equally spaced eigenvalues, and let $\ket{+} = d^{-1/2}\mathbf{1}$ be the uniform superposition. The Hamiltonian encoding is $\rho^x=\ket{\psi_x}\bra{\psi_x}$, where $\ket{\psi_x}= e^{-iHt_x}|{+}\rangle
    = \frac{1}{\sqrt{d}}\sum_{k=0}^{d-1} e^{-ih_k \pi x/N}\,|k\rangle$ with $t_x = \frac{\pi x}{N}$. The classical value $x$ is encoded as an evolution \emph{time} under the Hamiltonian $H$. Since $e^{-iHt_x}$ is a diagonal unitary, Hamiltonian encoding is a special case of phase encoding with phases $e^{-ih_k\pi x/N}$ replacing $e^{2\pi ixk/N}$. For $d=2$ with $H=Z=\operatorname{diag}(1,-1)$, the codewords are $\ket{\psi_x} = \frac{1}{\sqrt{2}}(
    e^{-i\pi x/N}\ket{0}+ e^{+i\pi x/N}\ket{1}),$ which form a tight frame for every $N$, giving $\mathcal{Q}(X\to A)_\rho=\log (2)$. For $d>2$ the non-uniform phase structure ($h_k$ are not integer multiples of $2\pi/N$) breaks perfect frame tightness in general, so $\mathcal{Q}(X\to A)_\rho$ is slightly below $\log (d)$, but the gap decreases as $N$ grows. Hamiltonian encoding arises naturally in quantum simulation and quantum sensing, where the signal $x$ may represent a physical parameter (field strength, coupling constant) that drives a Hamiltonian $H(x)=xH_0$ for some fixed $H_0$. After a fixed evolution time the probe state $e^{-iH(x)t}\ket{+}$ encodes $x$ as a phase.  Lloyd et al.~\cite{lloyd2020quantum} propose Hamiltonian simulation as a feature map for quantum-enhanced machine learning, showing that random Hamiltonians can generate kernels that are hard to evaluate classically.
    \item \textit{Random encoding}: For $x = 0,\ldots,N-1$, the encoded state is $\rho^x=\ket{\psi_x}\bra{\psi_x}$, where $\ket{\psi_x}= v_x/\|v_x\|$ with $v_x = u_x^{(1)} + i\,u_x^{(2)}$ such that $u_x^{(1)},u_x^{(2)}$ are independent real Gaussian vectors with mean zero and unit variance. This produces states distributed according to the \emph{Haar measure} on the unit sphere $S^{2d-1}\subset\mathbb{C}^d$. By concentration of measure on the sphere, the frame operator of $N$ i.i.d.\ Haar-random unit vectors concentrates around $(N/d)I_d$ for large $N$. More precisely, for fixed $d$ and $N\to\infty$, $S/N \to I_d/d$ almost surely (law of large numbers on the sphere), so random encodings approach tight frames asymptotically.
\end{enumerate}

\begin{figure}
    \centering
    \includegraphics[width=1\linewidth]{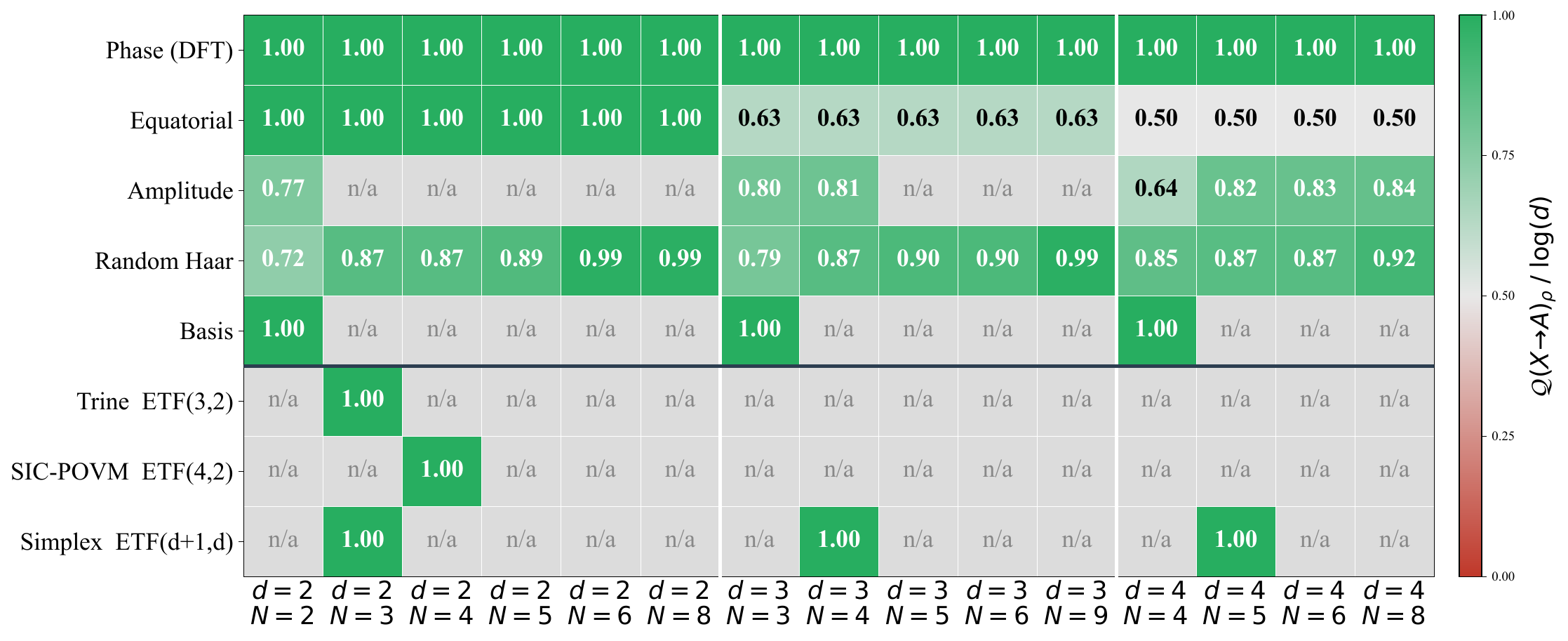}
    \caption{Maximal quantum leakage scaled by dimension, $\mathcal{Q}(X\!\to\!A)_\rho / \log(d)$, across fifteen $(N,d)$ parameter pairs for several encoding policies.}
    \label{fig:heatmap}
\end{figure}

Lemma~\ref{thm:tight} proves that tight frames are optimal encodings as they attain $\mathcal{Q}(X\to A)=\log (d)$ for $N\geq d$. Phase, equatorial (qubit), and Hamiltonian encodings are tight frames for $d=2$ at all $N$. For $d>2$ only Phase and the explicit ETF constructions (trine, SIC-POVM, simplex) maintain this property. Amplitude (the variant discussed above), equatorial, and dense-angle encodings for $d>2$ confine all codewords to the two-dimensional $\{\ket{0},\ket{1}\}$ subspace. They fail the spanning condition of Lemma~\ref{thm:tight} and achieve $\mathcal{Q}(X\to A)_\rho\leq\log (2)<\log (d)$. Among all tight frames achieving {$\mathcal{Q}(X\to A)_\rho=\log (d)$}, ETFs uniquely satisfy the Welch bound. Phase encoding achieves $\mathcal{Q}(X\to A)_\rho=\log (d)$ but with coherence far exceeding the Welch bound while the trine, SIC-POVM, and simplex achieve both in their corresponding feasible parameter regions. Finally, random encoding is a strong practical solution. For large $N/d$, random states concentrate near tight frames
and achieve near-optimal $\mathcal{Q}(X\to A)_\rho$ without any design effort. Explicit ETFs are preferred only when the Welch-bound coherence constraint (e.g.\ for quantum key distribution or tomography) matters.

Figure~\ref{fig:heatmap} reports the fraction of the theoretical maximum leakage achieved by each encoding $\mathcal{Q}(X\!\to\!A)_\rho / \log(d)$ across fifteen $(N,d)$ parameter pairs and five encoding common encoding policies with the three optimal ETF constructions shown below the separator for reference. Several patterns are immediately apparent. Phase (DFT) encoding achieves the maximum $\mathcal{Q}(X\!\to\!A)_\rho / \log(d) = 1.00$
in every cell, confirming that it forms a tight frame for all $N\geq d$ regardless of dimension. Equatorial encoding matches this performance for $d=2$ across all $N$, but degrades as $d$ increases because all codewords lie in the two-dimensional subspace $\mathrm{span}\{\ket{0},\ket{1}\}$ and fail to span $\mathbb{C}^d$ when $d>2$ violating the spanning condition. Amplitude encoding is applicable only when $d \geq \lceil\log_2 N\rceil + 1$, and achieves substantially suboptimal leakage but improving as $N/d$ grows. Random Haar encoding approaches optimality for large $N/d$, consistent with the concentration-of-measure argument that Haar-random states approximate tight frames asymptotically, yet never exactly attains the ceiling.
Basis encoding achieves optimality, however, it is only applicable when $N\leq d$. The three ETF rows confirm the paper's main results on their optimality. 

\subsection{Classification via Quantum Machine Learning}
\label{sec:exp2}
We study a $4$-class quantum classification task.
The input alphabet is $\mathcal{X}=\{0,\ldots,7\}$ ($N=8$), the
Hilbert space is $\mathcal{H}\cong \mathbb{C}^{4}$ ($d=4$, two qubits). The output label $Z\in\{0,1,2,3\}$ is drawn from a \emph{fixed} balanced random partition of $\mathcal{X}$ ({two tokens per class}), chosen once and held constant across all encodings and seeds. With a uniform input prior, $\max_z P\{Z=z\}=\tfrac{1}{4}$.

\begin{figure}
    \centering
    \definecolor{EncFill}{HTML}{FFFBEB}
\definecolor{EncBord}{HTML}{92400E}
\definecolor{RotFill}{HTML}{EEF2FF}
\definecolor{RotBord}{HTML}{3730A3}
\definecolor{NoiFill}{HTML}{F8FAFC}
\definecolor{NoiBord}{HTML}{94A3B8}
\definecolor{LayFill}{HTML}{F0F9FF}
\definecolor{LayBord}{HTML}{7DD3FC}
\definecolor{WireCol}{HTML}{111827}
\definecolor{ClasBord}{HTML}{374151}
 
\tikzset{
  qg/.style={
    draw=black!58, line width=0.40pt, fill=white,
    rounded corners=3pt, inner sep=2.5pt,
    font=\small, align=center,
  },
  Genc/.style={qg,
    fill=EncFill, draw=EncBord, line width=0.85pt,
    minimum width=1.32cm, minimum height=2.52cm,
  },
  Grot/.style={qg,
    fill=RotFill, draw=RotBord, line width=0.40pt,
    minimum width=1.6cm, minimum height=0.66cm,
  },
  Gnoi/.style={qg,
    fill=NoiFill, draw=NoiBord,
    minimum width=0.75cm, minimum height=0.66cm,
  },
  Gmea/.style={qg,
    draw=black!68, line width=0.40pt, fill=white,
    minimum width=0.86cm, minimum height=0.66cm,
  },
  Gcls/.style={
    draw=ClasBord, double, double distance=0.82pt, line width=0.36pt,
  },
  Gprc/.style={qg,
    draw=ClasBord, line width=0.44pt, fill=gray!4,
    rounded corners=4pt, inner sep=5pt,
  },
}
    \begin{tikzpicture}[line cap=round, line join=round]
    \draw[WireCol, line width=0.72pt]
      (0.78, 0.00) -- (10.92, 0.00);
    \draw[WireCol, line width=0.72pt]
      (0.78,-1.62) -- (10.92,-1.62);
     
    \node[Genc] (ENC) at (1.42,-0.81)
      {$x\,{\mapsto}$\\[3pt]$|\psi_x\rangle$};
     
    \node[Gnoi,scale=0.9] (DA0) at (2.96, 0.00) {$\mathcal{D}_p$};
    \node[Gnoi,scale=0.9] (DA1) at (2.96,-1.62) {$\mathcal{D}_p$};
     
    \node[Grot,scale=0.9] (RY0) at (4.58, 0.00)
      {$R_Y\!(\theta_0^{(\ell)})$};
    \node[Grot,scale=0.9] (RY1) at (4.58,-1.62)
      {$R_Y\!(\theta_1^{(\ell)})$};
     
    \node[Grot,scale=0.9] (RZ0) at (6.52, 0.00)
      {$R_Z\!(\phi_0^{(\ell)})$};
    \node[Grot,scale=0.9] (RZ1) at (6.52,-1.62)
      {$R_Z\!(\phi_1^{(\ell)})$};
     
    \draw[WireCol, line width=0.64pt] (7.86, 0.00) -- (7.86,-1.40);
    \filldraw[WireCol] (7.86, 0.00) circle[radius=0.092cm];
    \draw[WireCol, line width=0.64pt] (7.86,-1.62) circle[radius=0.22cm];
    \draw[WireCol, line width=0.64pt] (7.64,-1.62) -- (8.08,-1.62);
    \draw[WireCol, line width=0.64pt] (7.86,-1.40) -- (7.86,-1.84);
     
    \node[Gnoi] (DB0) at (9.08, 0.00) {$\mathcal{D}_p$};
    \node[Gnoi] (DB1) at (9.08,-1.62) {$\mathcal{D}_p$};
     
    \node[Gmea] (M0) at (10.48, 0.00) {};
    \coordinate (mb0) at ($(M0.south)!0.30!(M0.north)$);
    \draw[WireCol, line width=0.40pt]
      ($(M0.west |- mb0)+(0.08cm,0)$) -- ($(M0.east |- mb0)+(-0.08cm,0)$);
    \draw[WireCol, line width=0.40pt]
      ($(mb0)+(-0.17cm,0)$)
      arc[start angle=180, end angle=0, radius=0.17cm];
    \draw[-{Stealth[length=2.4pt, width=1.8pt]}, WireCol, line width=0.47pt]
      (mb0) -- ++(34:0.21cm);
     
    \node[Gmea] (M1) at (10.48,-1.62) {};
    \coordinate (mb1) at ($(M1.south)!0.30!(M1.north)$);
    \draw[WireCol, line width=0.40pt]
      ($(M1.west |- mb1)+(0.08cm,0)$) -- ($(M1.east |- mb1)+(-0.08cm,0)$);
    \draw[WireCol, line width=0.40pt]
      ($(mb1)+(-0.17cm,0)$)
      arc[start angle=180, end angle=0, radius=0.17cm];
    \draw[-{Stealth[length=2.4pt, width=1.8pt]}, WireCol, line width=0.47pt]
      (mb1) -- ++(34:0.21cm);
     
    \node[Gprc, minimum width=1.46cm, minimum height=0.90cm,scale=0.9]
      (AM) at (11.90,-0.81)
      {$\underset{k}{\arg\max}\;p_k$};
     
    \draw[Gcls, rounded corners=1.5pt] (M0.east) -| (AM.north);
    \draw[Gcls, rounded corners=1.5pt] (M1.east) -| (AM.south);
     
    \draw[-{Stealth[length=4.8pt, width=3.4pt]}, WireCol, line width=0.64pt]
      (AM.east) -- ++(0.68cm, 0);
    \node[font=\small] at (13.1,-.5)
      {$\hat{Z}$};
     
    \begin{scope}[on background layer]
      \path[draw=LayBord, line width=0.58pt,
            dashed, dash pattern=on 3.2pt off 1.8pt,
            fill=LayFill, rounded corners=5.5pt]
        (3.60, 0.58) rectangle (9.70,-2.20);
    \end{scope}
     
    \draw[
      decorate,
      decoration={brace, amplitude=4pt, raise=2.5pt},
      RotBord, line width=0.50pt,
    ]
      (3.60, 0.58) -- (9.7, 0.58)
      node[midway, above=8pt, font=\footnotesize\bfseries, text=RotBord]
        {Trainable unitary $U(\boldsymbol{\theta})$,\quad $\times L \in \{1,2,4,8,16\}$};
     
    \node[
      above=0.24cm of ENC.north,
      font=\footnotesize\bfseries, text=EncBord, align=center,
    ] {State\\[-1pt]encoding};
     
    \node[
      font=\footnotesize\bfseries, text=black!52, align=center,
    ] at (10.62, 1.04) {Readout};
     
    \draw[black!18, line width=0.5pt, dashed,xshift=1.4mm]
      (2.20, 0.78) -- (2.20,-2.00);
    \draw[black!18, line width=0.5pt, dashed, xshift=-0.6mm]
      (9.94, 0.78) -- (9.94,-2.00);
    \end{tikzpicture}
    \caption{Circuit schematic of the variational quantum circuit used for demonstrating the optimal encoding.}
    \label{fig:circuit}
\end{figure}
 
The variational quantum circuit (VQC) consists of $L\in\{1,2,4,8,16\}$ layers, each comprising per-qubit $(R_Y$--$R_Z)$ rotations followed by a CNOT gate and per-qubit depolarizing noise with rate $p=0.01$ per qubit per layer, matching a
realistic near-term device. The same ansatz is used with every encoding so that differences in
classification accuracy are attributable solely to the encoding itself. State preparation is also subject to the same depolarizing channel. Parameters are trained with the Adam optimizer (learning rate $0.05$) using exact parameter-shift gradients and the cross-entropy loss, evaluated on all $N=8$ input-output pairs per step. Results are averaged over 10 independent random initializations. Figure~\ref{fig:circuit} illustrates the variational quantum circuit.

We compare four encodings of phase (DFT), random Haar, amplitude, and equatorial encoding. The exact maximal quantum leakage {$Q(X\to A)_\rho=\log(N \cdot P_{\rm guess}(\{1/N,\rho^x\}_{x\in\X}))$} is computed for each
encoding via a semi-definite program, with results reported in Table~\ref{tab:exp2}. As expected, phase encoding achieves the theoretical maximum $Q(X\to A)_\rho=\log (d) = 2$  bits, with random Haar within 5\% of this ceiling, consistent with the concentration-of-measure argument.
 
\begin{table}[h]
\centering
\caption{Encodings used in classification via quantum machine learning ($N=8$, $d=4$).}
\label{tab:exp2}
\begin{tabular}{lcc}
\hline
Encoding & $Q(X \to A)_\rho$ & $P_{\rm guess}(\{1/N,\rho^x\}_{x\in\X})$ \\
\hline
Phase & $2.000$ & $0.5000$ \\
Random Haar & $1.910$ & $0.4697$ \\
Amplitude & $1.678$ & $0.4000$ \\
Equatorial & $1.000$ & $0.2500$ \\
\hline
\end{tabular}
\end{table}

\begin{figure}[t]
  \centering
  \includegraphics[width=0.8\textwidth]{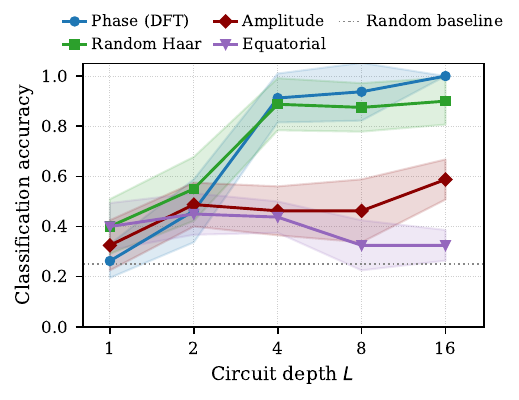}
  \caption{Classification accuracy vs.\ circuit depth $L$ with shaded bands illustrating the standard deviation among various random initializations.}
  \label{fig:acc_vs_depth}
\end{figure}

Figure~\ref{fig:acc_vs_depth} plots classification accuracy against
circuit depth $L$. The results stratify into three tiers that map precisely onto the
maximal quantum leakage value ranking in Table~\ref{tab:exp2}. In Tier~1,
phase encoding sits as an exact tight frame attaining maximal quantum leakage, with random Haar as an approximate tight frame attaining near-maximal leakage. Both encodings reach near-perfect or perfect classification by $L=8$. Phase encoding achieves $100\%$ accuracy at $L=16$ while random Haar reaches $87.5\%$--$100\%$.
In Tier~2, amplitude encoding sits as an example of a partial frame. Accuracy grows from $\approx 31\%$ at $L=1$ to $\approx 56\%$ at $L=16$, yet plateaus well below Tier~1. No amount of additional circuit depth bridges this gap. Binary amplitude encoding achieves the exponential dimensional compression promised by amplitude encoding, but it sacrifices leakage relative to tight-frame alternatives.
At the bottom, in Tier~3, equatorial sits as an example of encoding that wastes the quantum system's potential by only confining the encoding to two dimensions. Equatorial encoding degrades toward the random baseline at large depth, as the circuit overfits a fundamentally two-dimensional measurement structure. Figure~\ref{fig:training_curves} shows epoch-by-epoch accuracy at fixed depth $L=4$. Amplitude converges to an intermediate plateau while phase encoding climbs steeply to near-perfect performance and equatorial collapses toward the random baseline.

\begin{figure}[t]
  \centering
  \includegraphics[width=0.8\textwidth]{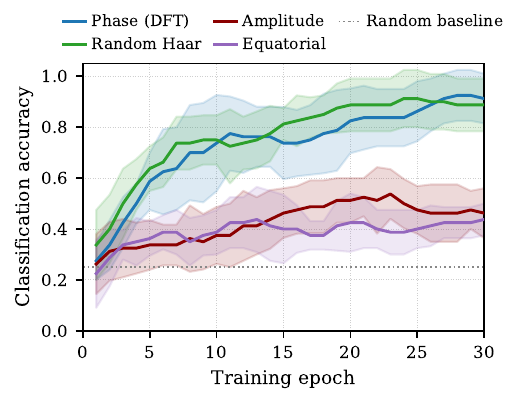}
  \caption{Training curves at fixed depth $L=4$.}
  \label{fig:training_curves}
\end{figure}

\section{Conclusions}
\label{sec:conclusion}
We have developed and validated a complete information-theoretic theory of optimal universal quantum encoding for statistical inference. This is done by rigorously establishing that quantum maximal leakage is the figure of merit for measuring quality of quantum encoding of classical data. Following this, we can compute the optimal encoding by maximizing maximal quantum leakage. The central result is an exact two-regime characterization of the optimal
encoder. In the complete regime ($N \leq d$), basis encoding achieves the
absolute maximum $Q(X\to A)_\rho = \log (N)$. However, in the overcomplete regime ($N > d$), the maximum leakage $Q(X\to A)_\rho = \log (d)$ is achieved  if the codewords form a tight frame with phase encoding being one such encoding. Given tight frames are not unique, we focus on symmetric choices by investigating equiangular tight frames (ETFs) as the
uniquely symmetric optimal encodings.
The numerical experiments corroborate the theory across two complementary
programmes. The leakage heatmap confirms that phase encoding attains largest maximal quantum leakage for every parameter choice when $N\geq d$, while random Haar encoding approaches this ceiling asymptotically. We also use a classification experiment to reveal a clean performance hierarchy aligned with the maximal quantum leakage value ranking, with tight-frame encodings (phase exactly, random Haar asymptotically) performing best.

Two directions for future work are particularly natural. First, the present theory assumes ideal state preparation. Extending the optimal characterization to noisy channels would replace the tight-frame
condition with a channel-dependent analogue and is directly relevant to the
noisy intermediate-scale quantum (NISQ) settings studied experimentally. Second, the optimal encodings identified here, e.g., ETFs, SIC-POVMs, but tight frames more broadly, may require state-preparation circuits of substantial depth or non-Clifford gate count, costs that are prohibitive on near-term hardware. A resource-aware theory of quantum encoding would characterize the best trade-off between maximal quantum leakage and the minimum circuit complexity required to realize the codewords, measured in, for example, two-qubit gate count, $T$-gate
count in the fault-tolerant setting, or entanglement cost. Such a framework would transform the present information-theoretic
optimality conditions into practically actionable design principles, making explicit the price in quantum resources that must be paid for each additional bit of leakage and, conversely, identifying the cheapest encoding that meets a prescribed leakage target.

\backmatter

\bibliography{refs}

\end{document}